\documentclass{article}

\usepackage{arxiv}
\usepackage{url}
\usepackage{fancyhdr}
\pagestyle{empty}

\usepackage{multirow}
\usepackage{diagbox}
\usepackage{algorithm}
\usepackage[noend]{algpseudocode}
\usepackage{subfigure}
\usepackage[T1]{fontenc}

\RequirePackage{algorithm}
\usepackage[backref]{hyperref}

\makeatletter
\newenvironment{breakablealgorithm}
  {
    \begin{center}
      \refstepcounter{algorithm}
      \hrule height.8pt depth0pt \kern2pt
      \parskip 0pt
      \renewcommand{\caption}[2][\relax]{
        {\raggedright\textbf{\fname@algorithm~\thealgorithm} ##2\par}%
        \ifx\relax##1\relax 
          \addcontentsline{loa}{algorithm}{\protect\numberline{\thealgorithm}##2}%
        \else 
          \addcontentsline{loa}{algorithm}{\protect\numberline{\thealgorithm}##1}%
        \fi
        \kern2pt\hrule\kern2pt
     }
  }
  {
     \kern2pt\hrule\relax
   \end{center}
  }
\makeatother
\usepackage{xcolor}
\definecolor{lightgray}{gray}{0.893}
\usepackage{colortbl}


\usepackage{mathtools} 
\usepackage{booktabs} 
\usepackage{tikz} 
\usepackage{bm}
\usepackage{amssymb,amsmath,amsthm}

\newtheorem{definition}{Definition}
\newtheorem{theorem}{Theorem}
\newtheorem{corollary}{Corollary}
\newtheorem{assumption}{Assumption}

\title{Learning to Coordinate with Anyone}

%
%

\author{%
 Lei Yuan$^{1, 2}$, Lihe Li$^{1}$, Ziqian Zhang$^1$, Feng Chen$^1$, Tianyi Zhang$^{1}$, Cong Guan$^1$,
 {Yang Yu}$^{1,2,}$\thanks{Corresponding Author}, Zhi-Hua Zhou$^1$\\
  $^1$ National Key Laboratory for Novel Software Technology, Nanjing University\\
  $^2$ Polixir.ai\\
  \texttt{\{yuanl, lilh, zhangzq, chenf, zhangty, guanc\}@lamda.nju.edu.cn},\\ \texttt{yuy@nju.edu.cn}, \texttt{zhouzh@lamda.nju.edu.cn}
}

\begin{document}

\maketitle
\date{}
\begin{abstract}
In open multi-agent environments, the agents may encounter unexpected teammates. Classical multi-agent learning approaches train agents that can only coordinate with seen teammates. Recent studies attempted to generate diverse teammates to enhance the generalizable coordination ability, but were restricted by pre-defined teammates. In this work, our aim is to train agents with strong coordination ability by generating teammates that fully cover the teammate policy space, so that agents can coordinate with any teammates. Since the teammate policy space is too huge to be enumerated, we find only \emph{dissimilar} teammates that are \emph{incompatible} with controllable agents, which highly reduces the number of teammates that needed to be trained with. However, it is hard to determine the number of such incompatible teammates beforehand. We therefore introduce a continual multi-agent learning process, in which the agent learns to coordinate with different teammates until no more incompatible teammates can be found. The above idea is implemented in the proposed Macop (\textbf{M}ulti-\textbf{a}gent \textbf{co}mpatible \textbf{p}olicy learning) algorithm. We conduct experiments in 8 scenarios from 4 environments that have distinct coordination patterns. Experiments show that Macop generates training teammates with much lower compatibility than previous methods. As a result, in all scenarios Macop achieves the best overall coordination ability while never significantly worse than the baselines, showing strong generalization ability.
\end{abstract}

\section{Introduction}

Cooperative Multi-Agent Reinforcement Learning (MARL)~\cite{oroojlooy2023review} has garnered significant attention due to its demonstrated potential in various real-world applications. Recent studies have showcased MARL's exceptional performance in tasks such as pathfinding\cite{sartoretti2019primal}, active voltage control~\cite{DBLP:conf/nips/WangXGSG21}, and dynamic algorithm configuration~\cite{xue2022multiagent}. However, these achievements are typically made within closed environments where teammates are pre-defined. The system will suffer from coordination ability decline when deploying the trained policies in real-world scenarios, where agents may encounter unexpected teammates in such open environments~\cite{zhou2022open}. 

Training with diverse teammates presents a promising avenue for tackling the aforementioned challenge. Various methods have emerged in domains such as ad-hoc teamwork~\cite{mirsky2022survey}, zero-shot coordination \cite{treutlein2021new}, and few-shot teamwork \cite{fosong2022few}.
Addressing this challenge effectively involves two crucial factors. Firstly, to enhance generalization and avoid overfitting to specific partners, it is essential for agents to be exposed to diverse teammates during the training process. Diversity can be achieved through various techniques, such as hand-crafted policies \cite{papoudakis2021agent}, object regularizers designed among agents \cite{derek2021adaptable, lupu2021trajectory, lipo}, or population-based training (PBT) \cite{strouse2021collaborating, xue2022heterogeneous}. 
Secondly, when dealing with multiple teammates, especially in the context of multi-modal scenarios, specialized consideration is necessary. Naive approaches, like self-play (or "self-training")~\cite{tesauro1994td, silver2018general}, Fictitious Co-Play (FCP)~\cite{heinrich2015fictitious, strouse2021collaborating}, or coevolving agent and partner populations~\cite{xue2022heterogeneous}, have been explored (See related work in App.~\ref{relatedwork}). 
Nevertheless, complex scenarios often present substantial challenges arising from both the complexity and vastness of the teammate policy space. On one hand, enumerating all possible teammate groups is a daunting task, and training the agents can be time-consuming. On the other hand, even when we pre-define only representative and diverse teammates, we may still accidentally omit some instances. The exact number of such teammates cannot be determined in advance as well. This prompts a crucial question: Can we design a more efficient training paradigm that ensures our controllable agents are trained alongside partners in a policy space that guarantees coverage, ultimately enabling high generalization and effective coordination ability with diverse teammates? 

To tackle the mentioned issue, we propose a novel coordination paradigm known as Macop, with which we can obtain a multi-agent compatible policy via incompatible teammates evolution. The core principle of Macop is the adversarial generation of new teammate instances, which are strategically crafted to challenge and refine the ego-system's (the agents we control) coordination policy. However, the exact number of representative teammates can not be determined beforehand, and maintaining a sufficiently diverse population requires significant computing and storage resources. We therefore
introduce Continual Teammate Dec-POMDP (CT-Dec-POMDP), wherein the ego-system is trained with groups of teammates generated sequentially until convergence is reached. Our approach is rooted in two crucial factors: instance diversity and incompatibility between the newly generated teammates and the ego-system. During the training process, we iteratively refine teammate generation and optimize the ego-system until convergence is reached. This approach empowers the ego-system, leading to a coordination policy capable of seamlessly handling a wide array of team compositions and promptly adapting to new teammates.

We conduct experiments on different MARL benchmarks that have distinct coordination patterns, including Level-based Foraging (LBF)~\cite{lbf}, Predator-Prey (PP), Cooperative Navigation (CN) from MPE~\cite{maddpg}, and two customized maps from StarCraft Multi-agent Challenge (SMAC)~\cite{pymarl}. 
Experimental results show that our proposed Macop exhibits remarkable improvement in comparison to existing methods, achieving nearly 20\% average performance improvement in the conducted benchmarks compared to multiple baselines, and more experiments reveal it from multiple aspects.

\section{Problem Formulation} \label{problem formulation}

As we aim to solve a continual coordination problem, where the controllable agents are required to cooperate with diverse teammates which arise sequentially, we formalize it as a Continual Teammate Dec-POMDP (CT-Dec-POMDP) by extending the Dec-POMDP~\cite{oliehoek2016concise}. The CT-Dec-POMDP can be described as a tuple $\mathcal{M} = \langle\mathcal{N, S,A},P,\{\boldsymbol{\pi}_{\text{tm}}^k\}_{k=1}^\infty, m,$ $\Omega, O,R,\gamma \rangle$, here $\mathcal{N} = \{1, \dots, n\}$,  $\mathcal{S}$, $\mathcal{A}=\mathcal A^1\times...\times\mathcal A^n$ and $\Omega$ are the sets of corresponding agents, global state, joint action, observation.  $P$ is the transition function, $\{\boldsymbol{\pi}_{\text{tm}}^k\}_{k=1}^\infty$ represents the $k$ groups of teammates encountered sequentially during the training phase until time $t$,  $m$ is the number of controllable agents, and $\gamma \in [0, 1)$ represents the discounted factor. At each time step, agent $i$ receives the observation $o^i=O(s, i)$ and outputs the action $a^i\in\mathcal A^i$.

Concretely, when training to cooperate with a group of teammates $\pi_{\text{tm}}^k$, the agents do not have access to previous teammates groups $\pi_{\text{tm}}^{k'}, k'=1,...,k-1$. However, they are expected to remember how to cooperate with all previously encountered teammates groups. For simplicity, we denote a group of teammates as "teammate" when no ambiguity arises. The training phase of cooperating with teammate $\pi_{\text{tm}}^k$ can be described as $\mathcal{M}_k = \langle\mathcal{N, S,A},P,\boldsymbol{\pi}_{\text{tm}}^k, m, \Omega,$ $O,R,\gamma \rangle$. The controllable agents $\boldsymbol{\pi}_{\text{ego}}=\{\pi^1_{\text{ego}}, ...,\pi^m_{\text{ego}}\}\in\Pi_{\text{ego}}=\otimes_{i=1}^m \Pi_i$ and the teammate $\boldsymbol{\pi}_{\text{tm}}^k=\{\pi_{\text{tm}}^{k, m+1},...,\pi_{\text{tm}}^{k, n}\}\in\Pi_{\text{tm}}=\otimes_{i=m+1}^n \Pi_i$ formulate a new joint policy $\langle\boldsymbol{\pi}_{\text{ego}},\boldsymbol{\pi}_{\text{tm}}^k\rangle$. 
The joint action $\langle\boldsymbol{a}_{\text{ego}},\boldsymbol{a}_{\text{tm}}^k\rangle=\langle\boldsymbol{\pi}_{\text{ego}}(\boldsymbol{\tau}_{\text{ego}}),\boldsymbol{\pi}_{\text{tm}}^k(\boldsymbol{\tau}_{\text{tm}}^k)\rangle$ leads to the next state $s'\sim P(\cdot|s, \langle\boldsymbol{a}_{\text{ego}},\boldsymbol{a}_{\text{tm}}^k\rangle)$ and the global reward $R(s, \langle\boldsymbol{a}_{\text{ego}},\boldsymbol{a}_{\text{tm}}^k\rangle)$, where $\boldsymbol{\tau}_{\text{ego}}=\{\tau^i\}_{i=1}^m, \boldsymbol{\tau}_{\text{tm}}^k=\{\tau^i\}_{i=m+1}^n$. The controllable agents are optimized to maximize the expected return when cooperating with teammate $\boldsymbol{\pi}_{\text{tm}}^k$: 

\begin{equation}
    \begin{aligned}
 \max_{\boldsymbol{\pi}_{\text{ego}}} \mathcal{J}(\langle\boldsymbol{\pi}_{\text{ego}}, \boldsymbol{\pi}_{\text{tm}}^{k}\rangle) =  \mathbb E_{\boldsymbol{\tau}\sim\rho(\langle\boldsymbol{\pi}_{\text{ego}},\boldsymbol{\pi}_{\text{tm}}^{k}\rangle)}[G(\boldsymbol{\tau})],
    \end{aligned}
    \end{equation}
where $G(\boldsymbol{\tau})=\sum_{t=0}^T \gamma^t R(s_t, \boldsymbol{a}_t)$  is the  return of a joint trajectory. At the same time, for a formal characterization of the relationship between the policy space of  $\boldsymbol{\pi}_{\text{ego}}$ and $\boldsymbol{\pi}_{\text{tm}}$, we introduce the concept of complementary policy class:

\begin{definition}[complementary policy class]
For any sub policy $\boldsymbol{\pi}\in\Pi_{i:j}=\otimes_{h=i}^j\Pi_h, i\leq j$, we define its complementary policy class as $\Pi^{c}_{\boldsymbol{\pi}}=\otimes_{h=1}^{i-1}\Pi_h\times\otimes_{h=j+1}^n\Pi_h$. We denote the complementary policy class of controllable agents and the teammate as $\Pi^c_{\text{ego}}$ and $\Pi^c_{\text{tm}}$ for simplicity. We also refer $\mathcal{J}_{\text{sp}}(\boldsymbol{\pi}_{\text{ego}}) = \max_{\bar{\boldsymbol{\pi}}_{\text{tm}}\in\Pi_{\text{ego}}^c}\mathcal{J}(\langle\boldsymbol{\pi}_{\text{ego}}, \bar{\boldsymbol{\pi}}_{\text{tm}}\rangle)$ and $\mathcal{J}_{\text{sp}}(\boldsymbol{\pi}_{\text{tm}}) = \max_{\bar{\boldsymbol{\pi}}_{\text{ego}}\in\Pi_{\text{tm}}^c}\mathcal{J}(\langle\bar{\boldsymbol{\pi}}_{\text{ego}}, \boldsymbol{\pi}_{\text{tm}}\rangle)$ as ``self-play return'' of $\boldsymbol{\pi}_{\text{ego}}$ and $\boldsymbol{\pi}_{\text{tm}}$, respectively.
\label{def1}
\end{definition}


\section{Method}

In this section, we will present the detailed design of our proposed
method Macop (c.f. Fig.~\ref{macopflow}). First, we introduce a novel continual teammate generation module by combining population-based training
and incompatible policy learning (Fig.~\ref{macopflow}(a)). Next, we outline
the design of our continual coordination policy learning paradigm,
which consists of a shared backbone and a dynamic head expansion
module (Fig.~\ref{macopflow}(b)). These two phases proceed alternatively to train a
robust multi-agent coordination policy that is capable of effectively
cooperating with diverse teammates (Fig.~\ref{macopflow}(c)). 

\begin{figure}[H]
  \centering
    \includegraphics[width=0.66\textwidth]{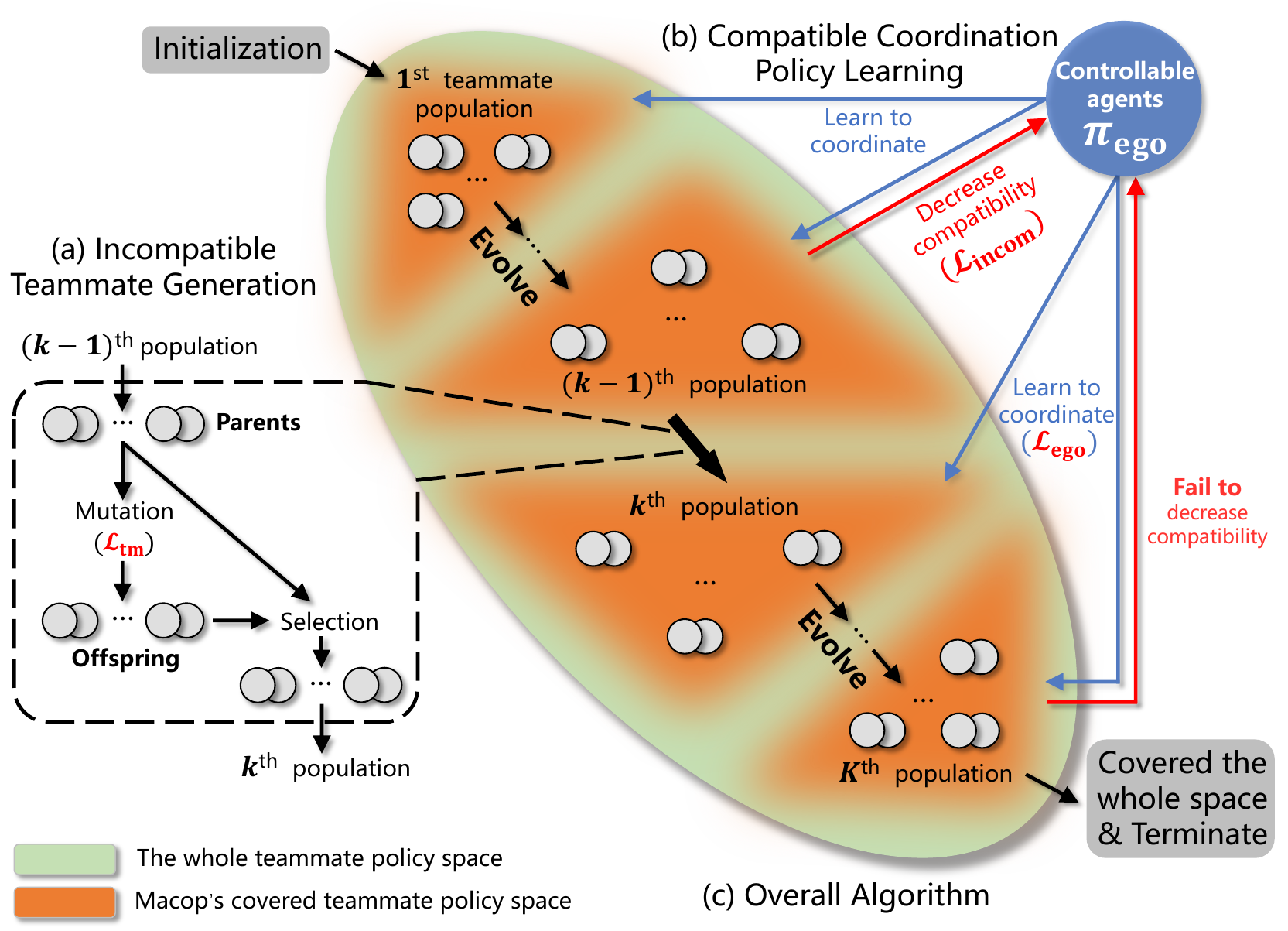}
  \caption{The overall workflow of Macop.}
      \label{macopflow}
\end{figure}

\subsection{Incompatible teammate generation}

The objective of Macop is to develop a joint policy that can effectively cooperate with diverse teammates. Since the policy space of teammate groups is too huge to be enumerated, we focus on identifying dissimilar teammate groups. 
To achieve this, we begin by establishing a complementary-policy-agnostic measure capable of effectively quantifying the similarity between two teammate groups, ensuring that it remains unaffected by complementary policies. In particular, we pair two teammate groups with any arbitrary complementary policy, as defined in Def.~\ref{def1}. These groups are considered similar if the probability of the trajectory produced by both groups surpasses a predefined threshold.
\begin{definition}[$\epsilon$-similar policies]
We measure the similarity between two different teammates $\boldsymbol{\pi}_{\text{tm}}^{i}, \boldsymbol{\pi}_{\text{tm}}^{j}$ with the probability of the trajectory induced by them when paired with any complementary policies. Specifically, for any fixed complementary policy $\bar{\boldsymbol{\pi}}\in\Pi_{\text{tm}}^c$, the probability of the trajectory produced by the joint policy $P(\boldsymbol{\tau}|\langle\bar{\boldsymbol{\pi}}, \boldsymbol{\pi}_{\text{tm}}\rangle)=\prod_{t=0}^{T-1}\bar{\boldsymbol{\pi}}(\bar{\boldsymbol{a}}_t|\bar{\boldsymbol{\tau}}_t)\boldsymbol{\pi}_{\text{tm}}(\boldsymbol{a}_{tm, t}|\boldsymbol{\tau}_{tm, t})P( $ $s_{t+1}|s_t, \langle\bar{\boldsymbol{a}}_t, \boldsymbol{a}_{tm, t}\rangle)$. Accordingly, we define the dissimilarity between the two teammates $d(\boldsymbol{\pi}_{\text{tm}}^i, \boldsymbol{\pi}_{\text{tm}}^j)=\max_{\boldsymbol{\tau}}|1-\frac{P(\boldsymbol{\tau}|\langle\bar{\boldsymbol{\pi}}, \boldsymbol{\pi}_{\text{tm}}^i\rangle)}{P(\boldsymbol{\tau}|\langle\bar{\boldsymbol{\pi}}, \boldsymbol{\pi}_{\text{tm}}^j\rangle)}|=\max_{\boldsymbol{\tau}}|1-\prod_{t=0}^{T-1}\frac{\boldsymbol{\pi}_{\text{tm}}^i(\boldsymbol{a}_{tm, t}|\boldsymbol{\tau}_{tm, t})}{\boldsymbol{\pi}_{\text{tm}}^j(\boldsymbol{a}_{tm, t}|\boldsymbol{\tau}_{tm, t})}|$. Teammates $\boldsymbol{\pi}_{\text{tm}}^{i}$ and $ \boldsymbol{\pi}_{\text{tm}}^{j}$ are $\epsilon-$similar policies if and only if $d(\boldsymbol{\pi}_{\text{tm}}^i, \boldsymbol{\pi}_{\text{tm}}^j)\leq \epsilon, 0\leq \epsilon\leq 1$, which implies that $1-\epsilon\leq\frac{P(\boldsymbol{\tau}|\langle\bar{\boldsymbol{\pi}}, \boldsymbol{\pi}_{\text{tm}}^i\rangle)}{P(\boldsymbol{\tau}|\langle\bar{\boldsymbol{\pi}}, \boldsymbol{\pi}_{\text{tm}}^j\rangle)}\leq 1+\epsilon, \forall\boldsymbol{\tau}$. 
\label{def2}
\end{definition}

Based on the Def.~\ref{def2} above, our approach involves the identification of representative teammate groups, ensuring that the dissimilarity between them surpasses the specified threshold $\epsilon$.  We continually generate such dissimilar 
teammate groups in order to gradually cover the space of teammate policies. Drawing inspiration from the proven efficacy of population-based training (PBT)~\cite{jaderberg2019human} and evolutionary algorithms (EA)~\cite{zhou2019evolutionary}, we adopt an evolutionary process to formulate the teammate generation process by maintaining a population of teammates $\mathcal{P}_{\text{tm}}=\{\boldsymbol{\pi}_{\text{tm}}^j\}_{j=1}^{n_p}$ under the changing controllable agents $\boldsymbol{\pi}_{\text{ego}}$. By ensuring that the teammate groups exhibits dissimilarity between instances in not only the current population but also previous ones, our aim is to systematically explore and cover the entire teammate policy space over time.

Specifically, in each generation, the current population is first initialized through a customized parent selection mechanism (details
provided later). We focus on promoting diversity within the teammate population, striving to enhance the dissimilarity between each individual, i.e., $\max \sum_{i\neq j } d(\boldsymbol{\pi}_{\text{tm}}^i, \boldsymbol{\pi}_{\text{tm}}^j)$. To achieve the goal mentioned, we take Jensen-Shannon divergence (JSD)~\cite{jsd} as a reliable proxy to effectively measure the dissimilarity between teammates' policies as is introduced in~\cite{romance}:
\begin{equation}
    \begin{aligned}
        \mathcal{L}_{\text{div}}=&\mathbb{E}_{s}[\text{JSD}(\{\boldsymbol{\pi}_{\text{tm}}^{i}\}_{i=1}^{n_p})]\\
        =&\mathbb{E}_{s}[\frac{1}{n_p}\sum_{i=1}^{n_p}D_{KL}(\boldsymbol{\pi}_{\text{tm}}^i(\cdot|s)||\bar{\boldsymbol{\pi}}_{\text{tm}}(\cdot|s))],
    \end{aligned}
    \label{loss_div}
\end{equation}
where $\bar{\boldsymbol{\pi}}_{\text{tm}}(\cdot|s)=\frac{1}{n_p}\sum_{i=1}^{n_p}\boldsymbol{\pi}_{\text{tm}}^i(\cdot|s)$ is the average policy of the population, and $D_{KL}$ is the Kullback-Leibler (KL) divergence between two distribution. We provide proofs that the JSD proxy is a certifiable lower bound of the original dissimilarity objective in App.~\ref{ProofsforTheorems}.

The advantages of JSD are immediately apparent. Unlike TV divergence or KL divergence, which only allows pair-wise comparison between two distributions, JSD enables a more comprehensive assessment of the diversity within a population by accommodating multiple distributions. Meanwhile, JSD is symmetrically defined, which is invariant under the interchange of the distributions being compared, helping simplify the implementation.

Despite the effectiveness of the population-based training with the $\mathcal{L}_{\text{div}}$ in Eqn.~\ref{loss_div}, the continual generation would still result in teammate groups with similar behaviors in different generations without other guarantees. Meanwhile, the size of the population $n_p$ might also have a significant impact. Inspired by the relationship between similarity and compatibility proved in ~\cite{lipo}, we extend the theorem to our CT-Dec-POMDP:

\begin{definition}[$\epsilon$-compatible teammates]\label{compatibility}
For the controllable agents $\boldsymbol{\pi}_{\text{ego}}$, let $\mathcal{J}_{\text{sp}}(\boldsymbol{\pi}_{\text{ego}})=\alpha$. We refer $\boldsymbol{\pi}_{\text{tm}}$ as an $\epsilon$-compatible teammate  $\boldsymbol{\pi}_{\text{ego}}$ if and only if $\mathcal{J}(\langle\boldsymbol{\pi}_{\text{ego}}, \boldsymbol{\pi}_{\text{tm}}\rangle)\geq (1-\epsilon)\alpha$.
\end{definition}
\begin{theorem}
    Given the controllable agents $\boldsymbol{\pi}_{\text{ego}}$ and teammate policies $\boldsymbol{\pi}_{\text{tm}}$ and $\forall \boldsymbol{\pi}_{\text{tm}}'$, $\boldsymbol{\pi}_{\text{tm}}, \boldsymbol{\pi}_{\text{tm}}'$ are $\epsilon-$similar policies. Then we have
    $(1-\epsilon)\mathcal{J}(\langle\boldsymbol{\pi}_{\text{ego}},\boldsymbol{\pi}_{\text{tm}})\leq\mathcal{J}(\langle\boldsymbol{\pi}_{\text{ego}}, \boldsymbol{\pi}_{\text{tm}}'\rangle)\leq (1+\epsilon)\mathcal{J}(\langle\boldsymbol{\pi}_{\text{ego}},\boldsymbol{\pi}_{\text{tm}}\rangle)$.
    \label{thm2}
\end{theorem}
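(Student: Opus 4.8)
The plan is to reduce the claim about expected returns to the per-trajectory probability-ratio bound that Definition~\ref{def2} already supplies, and then integrate that bound against the return $G(\boldsymbol{\tau})$. The first thing I would note is that the controllable agents themselves constitute a legitimate complementary policy for the teammates: by Definition~\ref{def1}, $\Pi_{\text{tm}}^c=\otimes_{h=1}^m\Pi_h=\Pi_{\text{ego}}$, so $\boldsymbol{\pi}_{\text{ego}}\in\Pi_{\text{tm}}^c$ and we may take $\bar{\boldsymbol{\pi}}=\boldsymbol{\pi}_{\text{ego}}$ in Definition~\ref{def2}. Since $\boldsymbol{\pi}_{\text{tm}}$ and $\boldsymbol{\pi}_{\text{tm}}'$ are $\epsilon$-similar, and since the dissimilarity measure is complementary-policy-agnostic (the shared factors $\bar{\boldsymbol{\pi}}(\bar{\boldsymbol{a}}_t\mid\bar{\boldsymbol{\tau}}_t)$, the initial-state term, and every transition term $P(s_{t+1}\mid s_t,\cdot)$ cancel along a fixed trajectory, leaving only the product of teammate action-probability ratios), this instantiation immediately yields
\[
1-\epsilon\;\leq\;\frac{P(\boldsymbol{\tau}\mid\langle\boldsymbol{\pi}_{\text{ego}},\boldsymbol{\pi}_{\text{tm}}'\rangle)}{P(\boldsymbol{\tau}\mid\langle\boldsymbol{\pi}_{\text{ego}},\boldsymbol{\pi}_{\text{tm}}\rangle)}\;\leq\;1+\epsilon,\qquad\forall\boldsymbol{\tau}.
\]

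With this uniform bound in hand, I would expand both sides of the claim over trajectories, writing $\mathcal{J}(\langle\boldsymbol{\pi}_{\text{ego}},\boldsymbol{\pi}_{\text{tm}}'\rangle)=\sum_{\boldsymbol{\tau}}P(\boldsymbol{\tau}\mid\langle\boldsymbol{\pi}_{\text{ego}},\boldsymbol{\pi}_{\text{tm}}'\rangle)\,G(\boldsymbol{\tau})$ and analogously for $\boldsymbol{\pi}_{\text{tm}}$, in agreement with the definition of $\mathcal{J}$. Multiplying the displayed two-sided inequality through by $P(\boldsymbol{\tau}\mid\langle\boldsymbol{\pi}_{\text{ego}},\boldsymbol{\pi}_{\text{tm}}\rangle)\,G(\boldsymbol{\tau})$ and summing over $\boldsymbol{\tau}$ then transfers the bound termwise and hence in aggregate, producing exactly $(1-\epsilon)\mathcal{J}(\langle\boldsymbol{\pi}_{\text{ego}},\boldsymbol{\pi}_{\text{tm}}\rangle)\leq\mathcal{J}(\langle\boldsymbol{\pi}_{\text{ego}},\boldsymbol{\pi}_{\text{tm}}'\rangle)\leq(1+\epsilon)\mathcal{J}(\langle\boldsymbol{\pi}_{\text{ego}},\boldsymbol{\pi}_{\text{tm}}\rangle)$.

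The step I expect to be the crux --- more a hidden hypothesis than a hard calculation --- is the sign of $G(\boldsymbol{\tau})$. Multiplying an inequality by $P(\boldsymbol{\tau}\mid\cdot)\,G(\boldsymbol{\tau})$ preserves its direction only when $G(\boldsymbol{\tau})\geq 0$; were some trajectories to carry negative return, the inequalities would flip on those terms and the clean two-sided bound would break. I would therefore state explicitly the assumption that returns are non-negative (as holds for the reward structures of the benchmarks used) or, equivalently, pass to a shifted reward. A second, more pedestrian subtlety to handle carefully is directional: because the dissimilarity $d$ is not exactly symmetric, I must orient the ratio so that $P(\boldsymbol{\tau}\mid\langle\boldsymbol{\pi}_{\text{ego}},\boldsymbol{\pi}_{\text{tm}}'\rangle)$ sits in the numerator, ensuring the constants emerge as $1\pm\epsilon$ rather than as $1/(1\mp\epsilon)$; treating $\epsilon$-similarity as the intended symmetric relation makes this orientation available.
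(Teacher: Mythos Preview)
Your proposal is correct and follows essentially the same approach as the paper: instantiate the complementary policy as $\boldsymbol{\pi}_{\text{ego}}$ to obtain the uniform ratio bound $1-\epsilon\le P(\boldsymbol{\tau}\mid\langle\boldsymbol{\pi}_{\text{ego}},\boldsymbol{\pi}_{\text{tm}}'\rangle)/P(\boldsymbol{\tau}\mid\langle\boldsymbol{\pi}_{\text{ego}},\boldsymbol{\pi}_{\text{tm}}\rangle)\le 1+\epsilon$, multiply through by the nonnegative weight $P(\boldsymbol{\tau}\mid\langle\boldsymbol{\pi}_{\text{ego}},\boldsymbol{\pi}_{\text{tm}}\rangle)\,G(\boldsymbol{\tau})$, and integrate over trajectories. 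The paper handles your ``hidden hypothesis'' by stating an explicit assumption $0<G(\boldsymbol{\tau})\le 1$ immediately before the theorem, so your identification of that sign condition is exactly right.
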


The underlying idea behind Thm.~\ref{thm2} is that controllable agents, when effectively collaborating with a specific teammate group, will also be compatible with the teammate group's $\epsilon$-similar policies. Proofs are given in App.~\ref{ProofsforTheorems}. We thus have the following corollary:

\begin{corollary}
    Given the controllable agents $\boldsymbol{\pi}_{\text{ego}}$ and teammates $\boldsymbol{\pi}_{\text{tm}}$. If $\mathcal{J}(\langle\boldsymbol{\pi}_{\text{ego}}, \boldsymbol{\pi}_{\text{tm}}'\rangle)<(1-\epsilon)\mathcal{J}(\langle\boldsymbol{\pi}_{\text{ego}},\boldsymbol{\pi}_{\text{tm}})$, then $\boldsymbol{\pi}_{\text{tm}}$ and $\boldsymbol{\pi}_{\text{tm}}'$ are not $\epsilon$-similar policies, i.e., $d(\boldsymbol{\pi}_{\text{tm}}, \boldsymbol{\pi}_{\text{tm}}')>\epsilon$.
    \label{coro2}
\end{corollary}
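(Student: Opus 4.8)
The plan is to prove the corollary as a direct contrapositive of Theorem~\ref{thm2}, combined with the definitional equivalence supplied by Def.~\ref{def2}. Theorem~\ref{thm2} guarantees that whenever $\boldsymbol{\pi}_{\text{tm}}$ and $\boldsymbol{\pi}_{\text{tm}}'$ are $\epsilon$-similar, the return $\mathcal{J}(\langle\boldsymbol{\pi}_{\text{ego}}, \boldsymbol{\pi}_{\text{tm}}'\rangle)$ is sandwiched between $(1-\epsilon)\mathcal{J}(\langle\boldsymbol{\pi}_{\text{ego}},\boldsymbol{\pi}_{\text{tm}}\rangle)$ and $(1+\epsilon)\mathcal{J}(\langle\boldsymbol{\pi}_{\text{ego}},\boldsymbol{\pi}_{\text{tm}}\rangle)$. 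The hypothesis of the corollary asserts precisely that the lower end of this sandwich fails, so I would argue by contradiction: suppose $\boldsymbol{\pi}_{\text{tm}}$ and $\boldsymbol{\pi}_{\text{tm}}'$ \emph{were} $\epsilon$-similar.

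Under this supposition, I would invoke only the lower bound of Theorem~\ref{thm2}, which yields $\mathcal{J}(\langle\boldsymbol{\pi}_{\text{ego}}, \boldsymbol{\pi}_{\text{tm}}'\rangle)\geq(1-\epsilon)\mathcal{J}(\langle\boldsymbol{\pi}_{\text{ego}},\boldsymbol{\pi}_{\text{tm}}\rangle)$. This directly contradicts the strict inequality assumed in the corollary's hypothesis, namely $\mathcal{J}(\langle\boldsymbol{\pi}_{\text{ego}}, \boldsymbol{\pi}_{\text{tm}}'\rangle)<(1-\epsilon)\mathcal{J}(\langle\boldsymbol{\pi}_{\text{ego}},\boldsymbol{\pi}_{\text{tm}}\rangle)$. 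Hence the supposition must be false, and $\boldsymbol{\pi}_{\text{tm}}$ and $\boldsymbol{\pi}_{\text{tm}}'$ cannot be $\epsilon$-similar.

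To translate ``not $\epsilon$-similar'' into the claimed dissimilarity bound, I would appeal to Def.~\ref{def2}, where $\epsilon$-similarity is defined through the biconditional $d(\boldsymbol{\pi}_{\text{tm}}, \boldsymbol{\pi}_{\text{tm}}')\leq\epsilon$. Negating this biconditional immediately gives $d(\boldsymbol{\pi}_{\text{tm}}, \boldsymbol{\pi}_{\text{tm}}')>\epsilon$, which is exactly the conclusion sought.

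Since the analytic work is entirely carried by Theorem~\ref{thm2}, there is essentially no obstacle in this corollary; the only point requiring a little care is the logical form of the contrapositive. Theorem~\ref{thm2} establishes a \emph{two-sided} implication (both an upper and a lower bound follow from $\epsilon$-similarity), so I would emphasize that violating \emph{either} bound suffices to refute $\epsilon$-similarity, and that the hypothesis here violates the lower one. I would also note in passing that the statement concerns a fixed pair $(\boldsymbol{\pi}_{\text{tm}}, \boldsymbol{\pi}_{\text{tm}}')$ and inherits the complementary-policy-agnostic character of $d$ from Def.~\ref{def2}, so the conclusion is independent of the particular complementary policy $\bar{\boldsymbol{\pi}}$ used to define the dissimilarity.
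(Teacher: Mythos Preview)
Your proposal is correct and matches the paper's intent: the paper presents Corollary~\ref{coro2} as an immediate consequence of Theorem~\ref{thm2} without giving a separate proof, and your contrapositive argument is exactly the natural way to make that immediate consequence explicit. There is nothing to add; your care with the logical form (using only the lower bound and invoking Def.~\ref{def2} for the translation to $d(\cdot,\cdot)>\epsilon$) is appropriate.
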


The result from Cor.~\ref{coro2} shows that we can ensure that teammate groups generated in the current population are different from those before by decreasing its compatibility with the controllable agents $\boldsymbol{\pi}_{\text{ego}}$, which are trained to effectively collaborate with the teammates generated so far. Assuming that the controllable agents are fixed during the teammate population evolving stage, the optimization objective can be written as:
\begin{equation}
    \begin{aligned}
        \mathcal{L}_{\text{incom}} = -\frac{1}{n_p}\sum_{i=1}^{n_p} \mathcal{J}(\langle\boldsymbol{\pi}_{\text{ego}}, \boldsymbol{\pi}_{\text{tm}}^i\rangle).
    \end{aligned}
    \label{loss_incom}
\end{equation}

To ensure the meaningful learning of teammate groups' policies, it is crucial for each individual in the population to be capable of cooperating with complementary policies. Thus, the optimization of teammate focuses on maximizing the following objective:
\begin{equation}
    \begin{aligned}
        \mathcal{L}_{\text{sp}} = \frac{1}{n_p}\sum_{i=1}^{n_p} \mathcal{J}_{\text{sp}}( \boldsymbol{\pi}_{\text{tm}}^i),~
        where~i=1, 2,.., n_p.
    \end{aligned}
    \label{loss_sp}
\end{equation}

Considering the specified objectives, the complete objective function for the teammate population is as follows:
\begin{equation}
    \begin{aligned}
        \mathcal{L}_{\text{tm}} = \mathcal{L}_{\text{sp}} + \alpha_{\text{div}}  \mathcal{L}_{\text{div}} + \alpha_{\text{incom}}  \mathcal{L}_{\text{incom}},
    \end{aligned}
    \label{loss_tm}
\end{equation}
where, $\alpha_{\text{div}}$ and $\alpha_{\text{incom}}$ are adjustable hyper-parameters that control the balance between the three objectives.

\subsection{Compatible Coordination Policy Learning} \label{marlpolicylearning}

After generating a new teammate population that is diverse and incompatible with the controllable agents, we aim to train the controllable agents to effectively cooperate with newly generated teammate groups, as well as maintain the coordination ability with the trained ones. It requires the controllable agents to possess the continual learning ability, as introduced in Sec.~\ref{problem formulation}, where teammate policies appear sequentially in CT-Dec-POMDP.

In the context of evolutionary-generated teammate groups appearing sequentially, employing a single generalized policy network poses challenges due to the existence of multi-modality and varying behaviors among teammate groups. Consequently, conflicts and degeneration in the controllable agents' policies may arise. To address this issue, recent approaches like MACPro~\cite{yuan2023multi} have adopted a solution where customized heads are learned for each specific task.
Building upon this idea, our approach involves designing a policy network with a shared backbone denoted as $f_{\phi}$, complemented by multiple output heads represented as $\{h_{\psi_i}\}_{i=1}^m$. The shared backbone is responsible for extracting relevant features, while each output head handles making the final decisions.

With the structured policy network, when paired with the new teammate group's policy $\boldsymbol{\pi}_{\text{tm}}^{k+1}$, we first instantiate a new output head $h_{\psi_{m+1}}$. Subsequently, our focus shifts to training the controllable agents to effectively cooperate with the new teammate group.
\begin{equation}
    \begin{aligned}
        \mathcal{L}_{\text{com}} = \mathcal{J}(\langle \boldsymbol{\pi}_{\text{ego}}, \boldsymbol{\pi}_{\text{tm}}^{k+1} \rangle).
    \end{aligned}
    \label{loss_com}
\end{equation}
It is worth noting that once trained, the output heads $\{h_{\psi_i}\}_{i=1}^m$ remain fixed, and during the training process, the gradient $\mathcal{L}_{\text{com}}$ only propagates through the parameters $\phi$ and $\psi_{m+1}$.

Training the best response via $\mathcal{L}_{\text{com}}$ enables us to derive a policy that is capable of cooperating with the new teammate group $\boldsymbol{\pi}_{\text{tm}}^{k+1}$. However, the use of one shared backbone poses a challenge as it inevitably leads to forgetting previously learned cooperation, especially when encountering teammates with different behaviors, resulting in failure to cooperate with teammates seen before.
One straightforward approach to address this issue is to fix the parameters of the backbone upon completing the training of the first policy head. However, this approach has obvious drawbacks. On one hand, the fixed backbone might fail to extract common features adequately due to the limited coverage of training data. On the other hand, the output head's capacity might be insufficient, leading to suboptimal performance when training to cooperate with new teammates.
To mitigate the problem of catastrophic forgetting and enhance the policy's expressiveness, we apply a regularization objective by constraining the parameters from changing abruptly while learning the new output head $h_{\psi_{m+1}}$:

\begin{equation}
    \begin{aligned}
        \mathcal{L}_{\text{reg}} = \frac{1}{m}\sum_{i=1}^m ||\phi-\phi_i||_p,
    \end{aligned}
    \label{loss_reg}
\end{equation}
where $\phi_i$ is the saved snapshot of the backbone $\phi$ after obtaining the $i^{\text{th}}$ output head, and $||\cdot||_p$ is $l_{p}$ norm. This regularization mechanism helps to retain previously learned knowledge and ensures that the shared backbone adapts to the new teammate. Striking a balance between adaptability and retaining relevant knowledge, we can effectively enhance the cooperative performance of the policy with diverse teammates.
The overall objective of the controllable agents when encountering the $(k+1)^{\text{th}}$ teammate group is defined as:
\begin{equation}
    \begin{aligned}
        \mathcal{L}_{\text{ego}} = \mathcal{L}_{\text{com}} + \alpha_{\text{reg}}\mathcal{L}_{\text{reg}}
    \end{aligned},
    \label{loss_ego}
\end{equation}
where $\alpha_{\text{reg}}$ is a tunable weight.

Despite the effectiveness of combining the proposed $\mathcal{L}_{\text{ego}}$ and the carefully designed policy network architecture, a major limitation lies in its poor scalability as the number of output heads increases linearly with the dynamically generated teammate groups. To address this limitation and achieve better scalability, we propose a resilient head expansion strategy that effectively reduces the number of output heads while maintaining the policy's compatibility:
\begin{itemize}
    \item Upon completing the training of the output head $h_{\psi_{m+1}}$, we proceed to evaluate the coordination performance of this head and all the existing ones $\{h_{\psi_{i}}\}_{i=1}^{m+1}$ when paired with the new teammate group's policy $\boldsymbol{\pi}_{\text{tm}}^{k+1}$. The coordination performance is measured using the empirical average return $\{\hat{R}_i\}_{i=1}^{m+1}$, where $\hat{R}_i=\frac{1}{N} \sum_{j=1}^N G(\tau_j^i)$ represents the average return obtained by executing trajectories $\tau_j^i$ generated by applying the $i^{\text{th}}$ output head.
    \item To manage the number of output heads and prevent uncontrolled growth, we choose to retain the newly trained head if its performance surpasses a certain threshold compared to the best-performing existing head. Formally, we keep the newly trained head if $\frac{\hat R_{m+1}-\max_{i} \{\hat R_i\}_{i=1}^m}{\max_{i} \{\hat R_i\}_{i=1}^m}\geq \lambda$. This approach ensures that we only expand the number of output heads when there is a substantial improvement in performance, indicating that the new teammate group's behavior requires a distinct policy. Otherwise, if the existing output heads are sufficiently generalized to cooperate effectively with the new teammate, no new head will be expanded.
\end{itemize}

By adopting this resilient head expand strategy, we strike a balance between reducing the number of output heads and maintaining the policy's adaptability, resulting in a more scalable and efficient approach to handling dynamic teammate groups under the continual coordination setting.

\subsection{Overall Algorithm} \label{overall algorithm}
In this section, we present a comprehensive overview of the Macop (Multi-agent Compatible Policy Learning) procedure. Macop aims to train controllable agents to effectively cooperate with various teammate groups. 
During the training phase, Macop employs an evolutionary method to generate diverse and incompatible teammate groups and trains the controllable agents to be compatible with the teammates under the continual setting.
In each iteration (generation) $k (k > 1)$, we first select the $(k-1)^{\text{th}}$ teammate population $\mathcal{P}_{\text{tm}}^{k-1}$ as the parent population. Then, the offspring population is derived by training the parent population with $\mathcal{L}_{\text{tm}}$ in Eqn.~\ref{loss_tm}, i.e., mutation. The teammate groups are constructed based on value-based methods~\cite{vdn, qmix}, and $\boldsymbol{\pi}_{\text{tm}}(\cdot|s)$ is replaced with $softmax(Q_{\text{tm}}^i(\cdot|s))$ in $\mathcal{L}_{\text{div}}$ for practical use. With $n_p$ teammate groups of the parent population and $n_p$ teammate groups of the offspring population, we apply a carefully designed selection scheme as follows. To expedite the training of meaningful teammate groups, we first eliminate $\lfloor\frac{n_p}{2} \rfloor$ teammate groups with the lowest self-play return, i.e., $\max_{\bar{\boldsymbol{\pi}}_{\text{ego}}^i\in\Pi_{\text{tm}}^c}\mathcal{J}(\langle\bar{\boldsymbol{\pi}}_{\text{ego}}^i, \boldsymbol{\pi}_{\text{tm}}^i\rangle)$. Next, we proceed to eliminate $\lceil\frac{n_p}{2} \rceil$ teammate groups with the highest cross-play return under the controllable agents, i.e., $\mathcal{J}({\langle\boldsymbol{\pi}_{\text{ego}}, \boldsymbol{\pi}_{\text{tm}}^i\rangle})$, so as to improve incompatibility. Finally, we utilize the remaining $n_p$ teammate groups as the new teammate population of iteration $k$, i.e., $\mathcal{P}_{\text{tm}}^{k}$. 

\begin{figure*}
\setlength{\abovecaptionskip}{0cm}
  \centering
    \subfigure[Level-based Foraging (LBF)]{
    \label{env lbf4}
    \includegraphics[width=0.23\textwidth]{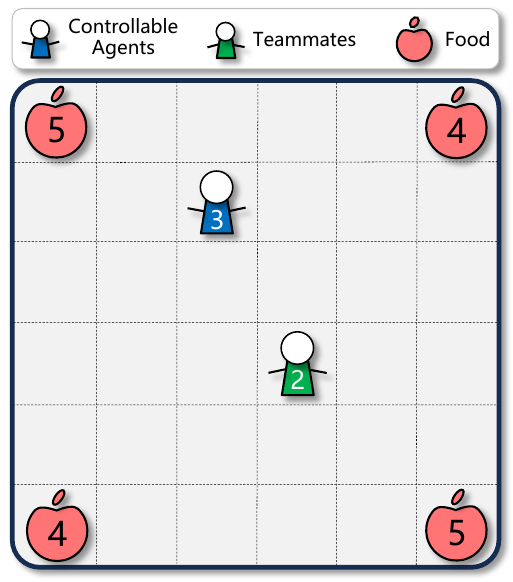}
    }
    \subfigure[Predator Prey (PP)]{
    \label{env pp1}
    \includegraphics[width=0.23\textwidth]{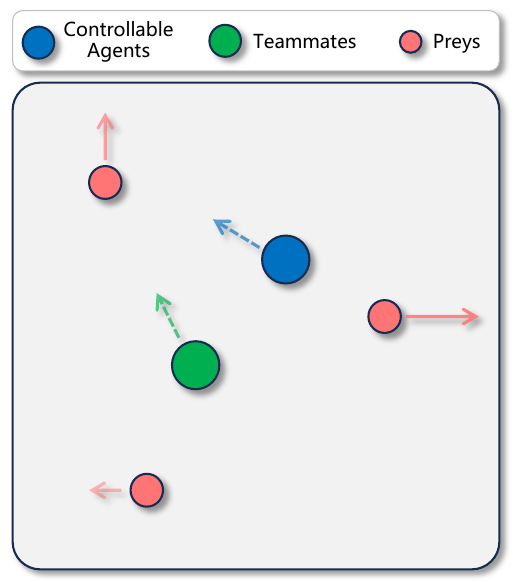}
    }
    \subfigure[Cooperative Navigation (CN)]{
    \label{env cn3}
    \includegraphics[width=0.23\textwidth]{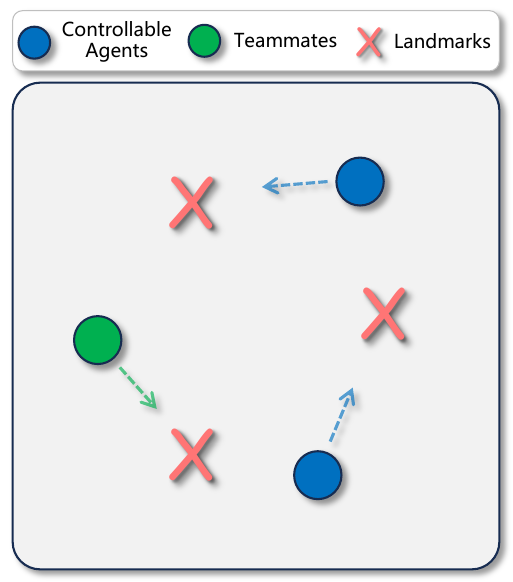}
    }
    \subfigure[Starcraft Multi-agent Challenge (SMAC)]{
    \label{env smac1}
    \includegraphics[width=0.23\textwidth]{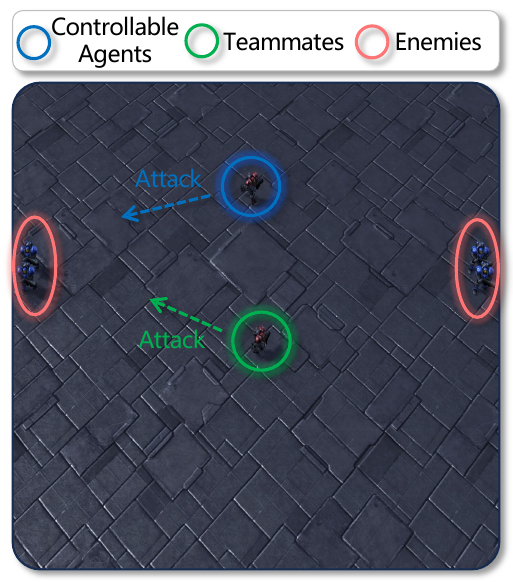}
    }
  \caption{Environments used in this paper, all details could be seen in App.~\ref{moreexperiments}. }
  \label{environments}
\end{figure*}

With the teammate population $\mathcal{P}_{\text{tm}}^{k}$ in place, we construct $n_p$ continual coordination processes in a sequential order and train controllable agents to learn compatible policies. The controllable agents are optimized using $\mathcal{L}_{\text{ego}}$ (defined in Eqn.~\ref{loss_ego}), and the output head is expanded as introduced in Sec.~\ref{marlpolicylearning}.

To determine when the continual process should be terminated, a carefully designed stopping criterion is employed. 
The training phase terminates at the $k^{\text{th}}$ iteration if the minimum cross-play return of $\mathcal{P}_{\text{tm}}^{k+1}$ and the controllable agents in iteration $k$ exceeds a certain value, i.e., $C = \frac{\min_{i}\mathcal{J}(\langle\boldsymbol{\pi}_{\text{ego}},\boldsymbol{\pi}_{\text{tm}}^i \rangle)}{\frac{1}{n_p}\sum_{i=1}^{n_p} \mathcal{J}_{\text{sp}}( \boldsymbol{\pi}_{\text{tm}}^i)}\geq \xi , ~ \boldsymbol{\pi}_{\text{tm}}^i \in \mathcal{P}_{\text{tm}}^{k+1}$. It indicates that the controllable agents at the $k^{\text{th}}$ iteration can effectively cooperate with the $(k+1)^{\text{th}}$ teammate population even they have been trained to decrease the compatibility, and the teammate policy space is covered for a given environment.

During the testing phase, a meta-testing paradigm is employed to determine which output head is selected to pair with an unknown teammate group. Initially, all output heads are allowed to interact with the teammate group to collect a few trajectories, and their cooperation abilities are evaluated based on empirical returns. The output head with the highest performance is then chosen for testing.
The pseudo-codes for both the training and testing phases of our Macop procedure are provided in App.~\ref{macopdetails}.

\section{Experiments}
In this section, we conduct a series of experiments to answer the following questions: 1) Can Macop generate controllable agents capable of effectively collaborating with diverse teammates in different scenarios, surpassing the performance of other methods?
2) Does the evolutionary generation of teammates bring about a noticeable increase in diversity, and how do our controllable agents compare to other baseline models in terms of compatibility?
3) What is the detailed training process of Macop?
4) How does each component and hyperparameter influence Macop? 

\subsection{Environments and Baselines}

We here select four multi-agent coordination environments and design eight scenarios as evaluation benchmarks (Fig.~\ref{environments}). Level-based Foraging (LBF)~\cite{lbf} presents a challenging multi-agent cooperative game, where agents with varying levels navigate through a grid world, collaboratively striving to collect food with different levels. The successful collection occurs when the sum of levels of participating agents matches or exceeds the level of the food item. Predator Prey (PP) and Cooperative Navigation (CN) are two benchmarks coming from the popular MPE environment~\cite{maddpg}.
In the PP scenario, agents (predators) must together pursue the moving adversaries (prey). On the other hand, in CN, multiple agents receive rewards when they navigate toward landmarks while ensuring they avoid collisions with one another. 
We also conduct experiments in the widely used StarCraft II combat scenario, SMAC~\cite{pymarl}, which involves unit micromanagement tasks. In this setting, ally units are trained to beat enemy units controlled by the built-in AI. We specifically design two scenarios for each mentioned benchmark (e.g., PP1 and PP2), and details could be found in App.~\ref{moreexperiments}.

\begin{table*}
    \centering
    \caption{Average test return $\pm$ std when paired with teammate groups from \textit{evaluation set} in different scenarios. We re-scale the value by taking the result of Finetune as an anchor and present average performance improvement w.r.t Finetune. The best result of each column is highlighted in \textbf{bold}. The symbols `$+$', `$\approx$', and `-' indicate that the result is significantly inferior to, almost equivalent to, and superior to Macop, respectively, based on the Wilcoxon rank-sum test~\cite{mann1947test} with confidence level $0.05$.}
    \resizebox{0.99\textwidth}{!}{\begin{tabular}{c|cccccccc|c}
        \toprule
        \multirow{2}{*}{\diagbox{Method}{Env}} & \multicolumn{2}{c}{LBF} & \multicolumn{2}{c}{PP} & \multicolumn{2}{c}{CN} & \multicolumn{2}{c|}{SMAC} & Avg. Performance \\
        \cmidrule(lr){2-3} \cmidrule(lr){4-5} \cmidrule(lr){6-7} \cmidrule(lr){8-9}
        & LBF1 & LBF4 & PP1 & PP2 & CN2 & CN3 & SMAC1 & SMAC2 & Improvement (\%)\\
        \midrule
        Macop (ours) & $1.14 \pm 0.02$ & $\mathbf{1.64 \pm 0.03}$ & $\mathbf{1.73 \pm 0.11}$ & $\mathbf{2.14 \pm 0.53}$ & $\mathbf{1.66 \pm 0.03}$ & $\mathbf{1.70 \pm 0.06}$ & $\mathbf{1.26 \pm 0.42}$ & $1.56 \pm 0.17$ & $\mathbf{60.44}$\\
        Single Head & $0.98 \pm 0.07$ & $1.10 \pm 0.32$ & $0.87 \pm 0.58$ & $1.44 \pm 0.52$ & $1.01 \pm 0.49$ & $0.99 \pm 0.24$ & $1.06 \pm 0.14$ & $1.25 \pm 0.40$  & $8.92$\\
        Random Head & $0.92 \pm 0.05$ & $0.85 \pm 0.10$ & $0.88 \pm 0.17$ & $1.18 \pm 0.39$ & $0.98 \pm 0.23$ & $0.92 \pm 0.11$ & $0.97 \pm 0.14$ & $1.28 \pm 0.21$ & $-0.25$\\
        LIPO~\cite{lipo} & $1.07 \pm 0.09$ & $1.53 \pm 0.14$ & $1.64 \pm 0.21$ & $1.93 \pm 0.52$ & $1.13 \pm 0.41$ & $1.33 \pm 0.25$ & $1.19 \pm 0.18$ & $1.08 \pm 0.21$ & $36.27$\\
        FCP~\cite{strouse2021collaborating} & $\mathbf{1.16 \pm 0.02}$ & $1.33 \pm 0.06$ & $1.17 \pm 0.85$ & $1.34 \pm 0.12$ & $0.90 \pm 0.48$ & $1.41 \pm 0.23$ & $0.97 \pm 0.19$ & $1.54 \pm 0.10$ & $25.82$\\
        TrajeDi~\cite{lupu2021trajectory} & $1.16 \pm 0.06$ & $1.34 \pm 0.11$ & $1.68 \pm 0.33$ & $1.56 \pm 0.52$ & $1.29 \pm 0.23$ & $1.53 \pm 0.11$ & $1.25 \pm 0.12$ & $\mathbf{1.57 \pm 0.16}$ & $42.26$\\    
        EWC~\cite{kirkpatrick2017overcoming} & $0.97 \pm 0.08$ & $0.99 \pm 0.16$ & $0.83 \pm 0.48$ & $0.77 \pm 0.43$ & $0.57 \pm 0.37$ & $0.71 \pm 0.27$ & $1.03 \pm 0.13$ & $0.61 \pm 0.09$ & $-18.82$\\
        Finetune & $1.00 \pm 0.16$ & $1.00 \pm 0.27$ & $1.00 \pm 0.58$ & $1.00 \pm 0.68$ & $1.00 \pm 0.31$ & $1.00 \pm 0.24$ & $1.00 \pm 0.17$ & $1.00 \pm 0.23$ & $/$\\
        \midrule
        $+/\approx/-$ & $3/4/0$& $6/1/0$& $2/5/0$ &$6/1/0$ &$7/0/0$ &$7/0/0$ &$5/2/0$ &$4/3/0$ & $7/0/0$ \\ 
        \bottomrule
    \end{tabular}}
    \label{table1}
\end{table*}

To investigate whether Macop is capable of coordinating with diverse seen/unseen teammates, we implement Macop on the popular value-based methods VDN~\cite{vdn} and QMIX~\cite{qmix}, and compare it with multiple baselines. 
First, to assess the impact of the teammate generation process on the coordination ability of the controllable agents, we compare Macop with FCP~\cite{strouse2021collaborating}, which initially generates a set of teammate policies independently and then trains the controllable agents to be the best response to the set of teammates. The diversity among teammate polices is achieved solely through network random initialization.
Additionally, we examine another population-based training mechanism that trains the teammate population using both $\mathcal{L}_{\text{sp}}$ and $\mathcal{L}_{\text{div}}$, aiming to generate teammates with enhanced diversity. This approach, which aligns with existing literature~\cite{lupu2021trajectory, ding2023coordination}, is referred to as TrajeDi for convenience.
On the other hand, LIPO~\cite{lipo} induces teammate diversity by reducing the compatibility between the teammate policies in the population. Concretely, it trains the teammate population with an auxiliary objective $\mathcal{J}_{\text{LIPO}} = -\sum_{i\neq j}\mathcal{J}(\langle  \boldsymbol{\pi}_{\text{tm}}^i, \boldsymbol{\pi}_{\text{tm}}^j \rangle)$, where the indices $i, j$ refer to two randomly sampled teammates in the population.
Furthermore,  with the teammate generation module held constant, we proceed to compare Macop with Finetune. Finetune directly tunes all the parameters of the controllable agents to coordinate with the currently paired teammate group. We also investigate two other approaches: Single Head, which applies regularization $\mathcal{L}_{\text{reg}}$ to the backbone but does not utilize the multi-head architecture, and Random Head, which randomly selects an existing head during evaluation, thus verifying the necessity of Macop's testing paradigm.
Finally, we employ the popular continual learning method EWC~\cite{kirkpatrick2017overcoming} to learn to coordinate with the teammates generated by TrajeDi, thereby providing an overall validation of the effectiveness of Macop.
More details  are illustrated in App.~\ref{moreexperiments}.


\subsection{Competitive Results}

In this section, we analyze the effectiveness of the controllable agents learned from different methods from two aspects: coordination performance with diverse seen/unseen teammates, and continual learning ability on a sequence of incoming teammates.

\textbf{Overall Coordination Performance}~To ensure a fair comparison of coordination performance, we aggregate all the teammate groups generated by Macop and baselines into an \textit{evaluation set}. For each method, we pair the learned controllable agents with teammate groups in this \textit{evaluation set} to run 32 episodes for each pairing. The average episodic return over all episodes when pairing with different teammate groups is calculated as the evaluation metric. This metric serves as a comprehensive measure of the overall coordination performance and generalization ability of the controllable agents. We run each method for five distinct random seeds.

As depicted in Tab.~\ref{table1}, we observe that approaches such as FCP, TrajeDi, and LIPO exhibit limited coordination generalization ability in different scenarios, especially when the population size is restricted. This highlights the need for ample coverage in teammate policy space to establish a robust coordination policy. Intriguingly, among the three methods mentioned, we found no significant differences, indicating that certain design elements, such as instance diversity among teammates, fail to fundamentally address this challenge.
In contrast, when using generated teammates, simply finetuning the multi-agent policy or employing widely-used continual approaches like EWC exhibits inferior coordination performance, as confirmed by our experiments and in line with the findings in MACPro~\cite{yuan2023multi}. This suggests that specialized designs tailored for multi-agent continual settings play a crucial role.
On the other hand, Macop exhibits a remarkable performance advantage over nearly all baselines across various scenarios, demonstrating that controllable agents trained by Macop possess robust coordination abilities. Furthermore, we discovered that the Single Head architecture struggles due to the presence of multi-modality in teammate behavior, underscoring the necessity of a multi-head architecture. An effectively designed testing paradigm, utilizing multiple available learned heads, proves indispensable. It is worth noting that Random Head fails to select the optimal head for evaluation, resulting in a degradation in performance.
Our pipeline relies on efficient design for continual learning, and more comprehensive results on the necessity of each component can be found in Sec.~\ref{ablation and sensitivity}.

\begin{table*}[htbp]
\centering
\caption{Continual Learning Ability. Average BWT/FWT $\pm$ std of four different methods in different evaluated environments. }
\resizebox{0.99\textwidth}{!}{
\begin{tabular}{c|cccccccccccc}
\toprule
\multirow{2}{*}{Method} & \multicolumn{2}{c}{LBF4} & \multicolumn{2}{c}{PP1} & \multicolumn{2}{c}{CN3} & \multicolumn{2}{c}{SMAC1} \\
\cmidrule(lr){2-3} \cmidrule(lr){4-5} \cmidrule(lr){6-7} \cmidrule(lr){8-9}
& BWT & FWT & BWT & FWT & BWT & FWT & BWT & FWT \\
\midrule
Macop & $\mathbf{-0.01 \pm 0.02}$ & $\mathbf{0.07 \pm 0.07}$ & $\mathbf{0.03 \pm 0.04}$ & $-0.16 \pm 0.18$ & $\mathbf{0.04 \pm 0.06}$ & $\mathbf{0.10 \pm 0.09}$ & $\mathbf{-0.02\pm0.11} $ & $0.07 \pm 0.19$ \\
CLEAR & $-0.05 \pm 0.07$ & $\mathbf{0.07 \pm 0.06}$ & $0.01 \pm 0.08$ & $\mathbf{-0.05 \pm 0.11}$ & $-0.16 \pm 0.15$ & $0.00 \pm 0.20$ & $-0.50\pm 0.32$ & $0.04 \pm 0.35$ \\
EWC & $-0.30 \pm 0.08$ & $0.05 \pm 0.07$ & $-0.34 \pm 0.08$ & $-0.05 \pm 0.13$ & $-0.20 \pm 0.11$ & $0.03 \pm 0.11$ & $-1.02 \pm 0.47$ & $0.05 \pm 0.31$ \\
Finetune & $-0.34 \pm 0.07$ & $0.04 \pm 0.06$ & $-0.37 \pm 0.07$ & $-0.05 \pm 0.22$ & $-0.31 \pm 0.11$ & $0.05 \pm 0.10$ & $-1.24 \pm 0.51$ & $\mathbf{0.33 \pm 0.35}$ \\
\bottomrule
\end{tabular}}
\label{table2}
\end{table*}

\begin{figure*}
\setlength{\abovecaptionskip}{0cm}
  \centering
    \subfigure[]{
    \label{cn2_tm_scatter}
    \includegraphics[width=0.20\textwidth]{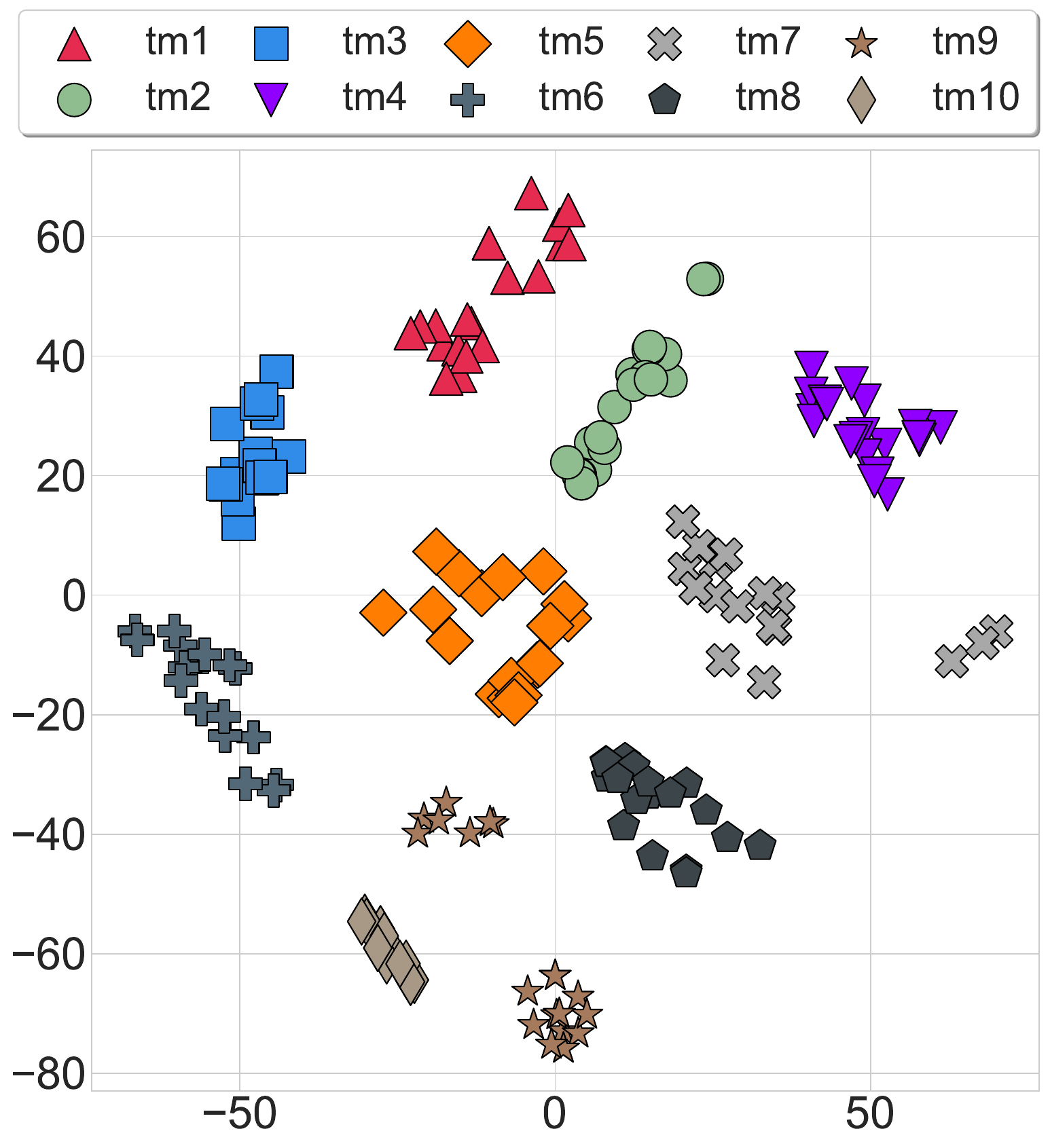}
    }
    \subfigure[]{
    \label{heatmap_ours}
    \includegraphics[width=0.23\textwidth]{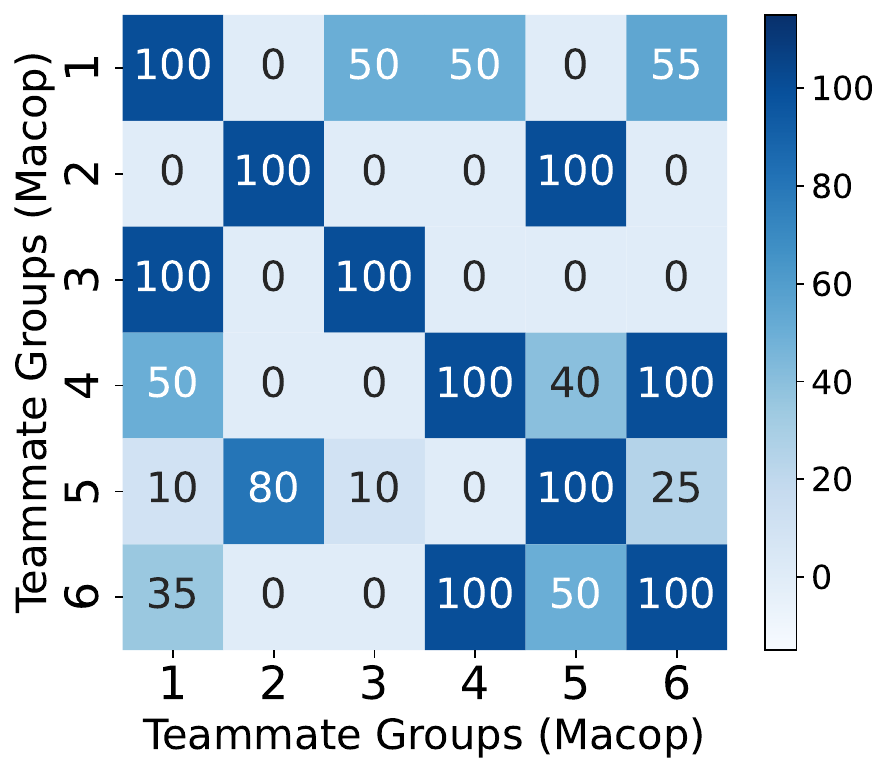}
    }
    \subfigure[]{
    \label{heatmap_trajedi}
    \includegraphics[width=0.23\textwidth]{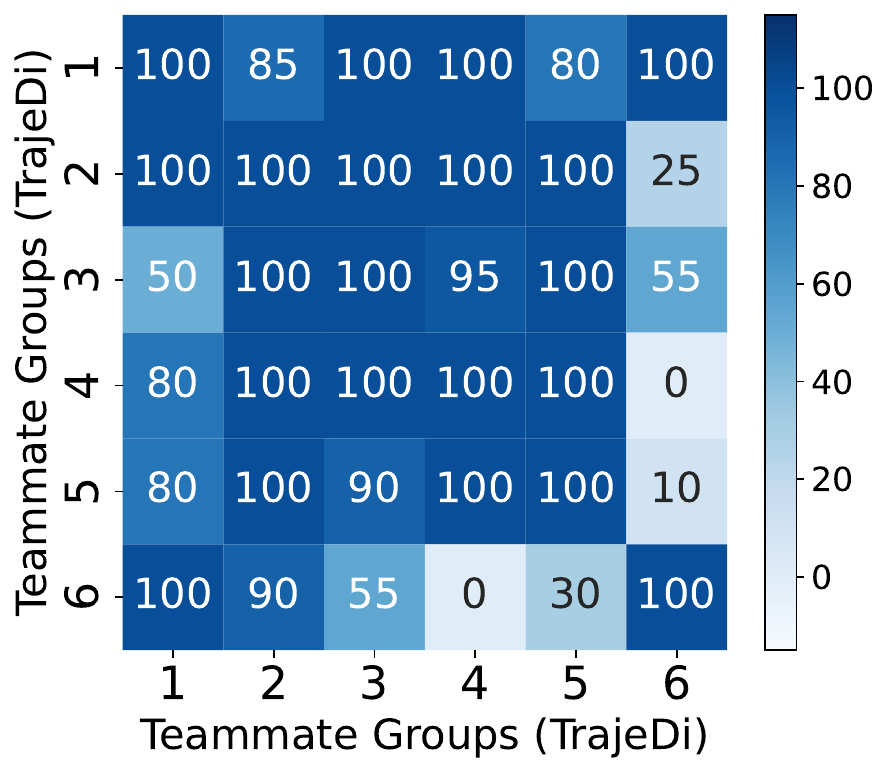}
    }
    \subfigure[]{
    \label{increase_tm}
    \includegraphics[width=0.27\textwidth]{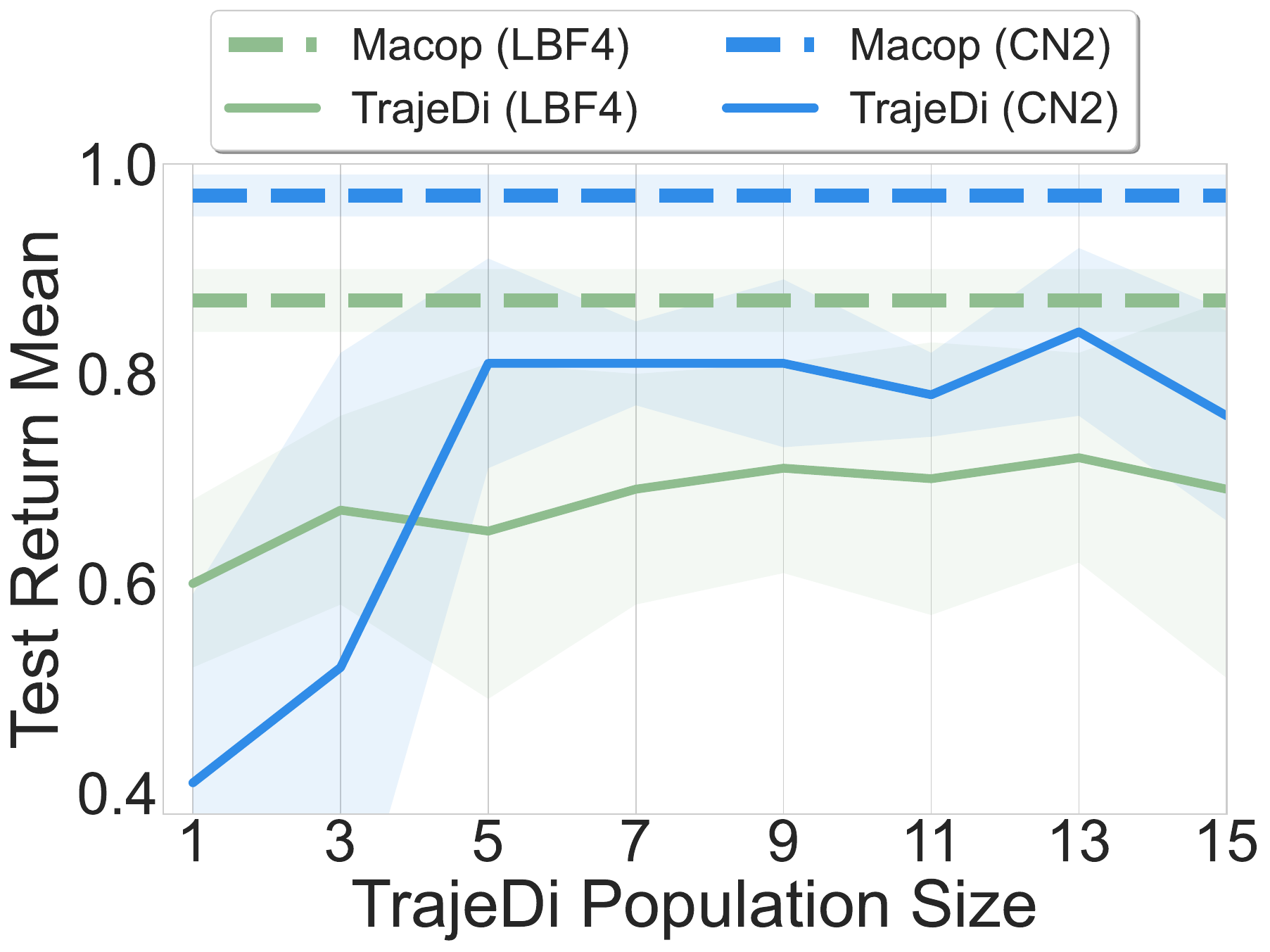}
    }
  \caption{Teammate policy space analysis. (a) The t-SNE projections of the self-play trajectory features of Macop's generated teammate groups in CN2. (b)(c) The cross-play returns of Macop's and TrajeDi's generated teammate groups in LBF4. (d) The change in TrajeDi's coordination ability with varying population sizes in LBF4 and CN2, compared with Macop.}
  \label{teammate space}
\end{figure*}

\textbf{Continual Learning Ability}~To further investigate the  continual learning ability of different methods, we utilize all teammate groups generated by Macop and all baselines to construct a fixed teammate sequence. Four continual learning methods are applied to train the controllable agents to coordinate with this teammate sequence in a continual manner, including Macop, CLEAR~\cite{Rolnick2018ExperienceRF}, EWC~\cite{kirkpatrick2017overcoming} and Finetune. CLEAR is a replay-based method which stores some data of previously trained teammates to rehearse the controllable agents when training with the current teammate group.
For a principled assessment of the continual learning ability, we introduce two metrics inspired by the concepts used in continual learning~\cite{wolczyk2021continual, wang2023comprehensive} within our CT-Dec-POMDP framework:

1) BWT$=\frac{1}{K-1}(\sum_{k=2}^{K}\frac{1}{k-1}\sum_{j=1}^{k-1}(\alpha_k^j -\alpha_j^j))$. 
BWT (Backward Transfer) evaluates the average influence of learning to cooperate with the newest teammate group on previously encountered teammates.
2) FWT$=\frac{1}{K-1}(\sum_{k=2}^{K}\frac{1}{k-1}\sum_{j=2}^k(\alpha_j^j-\tilde{\alpha}_j))$. FWT (Forward Transfer) assesses the average influence of all previously encountered teammate groups on the coordination performance of the new teammate.

Here, $\alpha_k^j$ represents the coordination performance of the controllable agents paired with the $j^{\text{th}}$ teammate group after training to cooperate with the $k^{\text{th}}$ teammate group, measured by the empirical episodic return. Additionally, $\tilde{\alpha}_j$ denotes the coordination performance of a randomly initialized complementary policy trained with the $j^{\text{{th}}}$ teammate group.

We record experimental results in Tab.~\ref{table2}. At first glance, Finetune demonstrates the worst BWT among all methods, validating the necessity of algorithm design to prevent catastrophic forgetting. However, even popular continual learning methods, CLEAR and EWC, grapple with forgetting to some degree. 
In contrast, Macop achieves the best BWT in all evaluated environments.
As for FWT, Macop obtains a competitive result compared with other methods.
Taking both BWT and FWT into consideration, Macop demonstrates a robust and adept continual learning ability. This aptitude empowers controllable agents to progressively acquire coordination proficiency with diverse teammates, and aligns seamlessly with the expanding coverage of the teammate policy space.


\subsection{Teammate Policy Space Analysis}


To investigate whether Macop is capable of generating teammate groups with diverse behaviors, a straightforward method involves comparing the self-play trajectories of different teammate groups.  Concretely, we first learn a transformer-based encoder to map trajectories into a low-dimensional feature space (details will be provided in App.~\ref{moreencoder}). We subsequently encode the teammates' self-play trajectories generated by Macop into the feature space. For visualization, we select 10 teammate groups from the CN2 scenario and extract their trajectory features, as shown in Fig.~\ref{cn2_tm_scatter}. The projection displays a notable dispersion, validating that teammate groups generated by Macop exhibit diverse behaviors as expected.

Furthermore, we conducted experiments to assess the compatibility among the generated teammate groups. In accordance with Def.~\ref{compatibility}, we paired different teammate groups in LBF4. The cross-play returns are presented in Fig.~\ref{teammate space}(b)(c), generated by  Macop and TrajeDi, respectively.
It is evident that when pairing two distinct groups from Macop, there is a noticeable drop in returns outside the main diagonal, indicating a lack of compatibility among the teammate groups generated by Macop. Conversely, the cross-play returns of TrajeDi's teammate groups are nearly identical to their self-play returns, suggesting a significantly lower level of incompatibility among teammate groups generated by TrajeDi because of poorer coverage of the teammate policy space.

To further explore whether methods without dynamically generating teammates can address policy space coverage by increasing the population size, we trained controllable agents using TrajeDi, varying in population size from 1 to 15. Subsequently, we evaluated the coordination ability using the \textit{evaluation set}, as depicted in Fig.~\ref{increase_tm}.
The results clearly illustrate that coordination ability improves as the population size increases until convergence is reached. However, a considerable performance gap between TrajeDi and Macop persists. Our analysis leads us to the conclusion that in intricate scenarios with multi-modality, vanilla methods that lack dynamic teammate generation struggle with new and unfamiliar teammates due to inadequate coverage of the teammates' policy space.  On the contrary, Macop's deliberate generation of incompatible teammates contributes to a more comprehensive coverage of the teammate policy space, ultimately enhancing its coordination ability.

\begin{figure*}
\setlength{\abovecaptionskip}{0cm}
  \centering
    \subfigure[]{
    \label{traj_4tm}
    \includegraphics[width=0.18\textwidth]{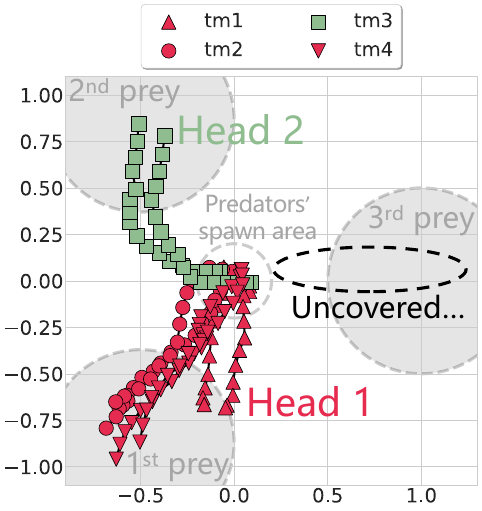}
    }
    \subfigure[]{
    \label{traj_5tm}
    \includegraphics[width=0.18\textwidth]{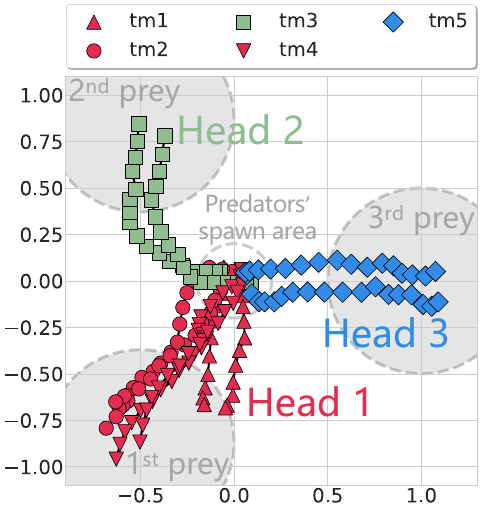}
    }
    \subfigure[]{
    \label{process_plot}
    \includegraphics[width=0.32\textwidth]{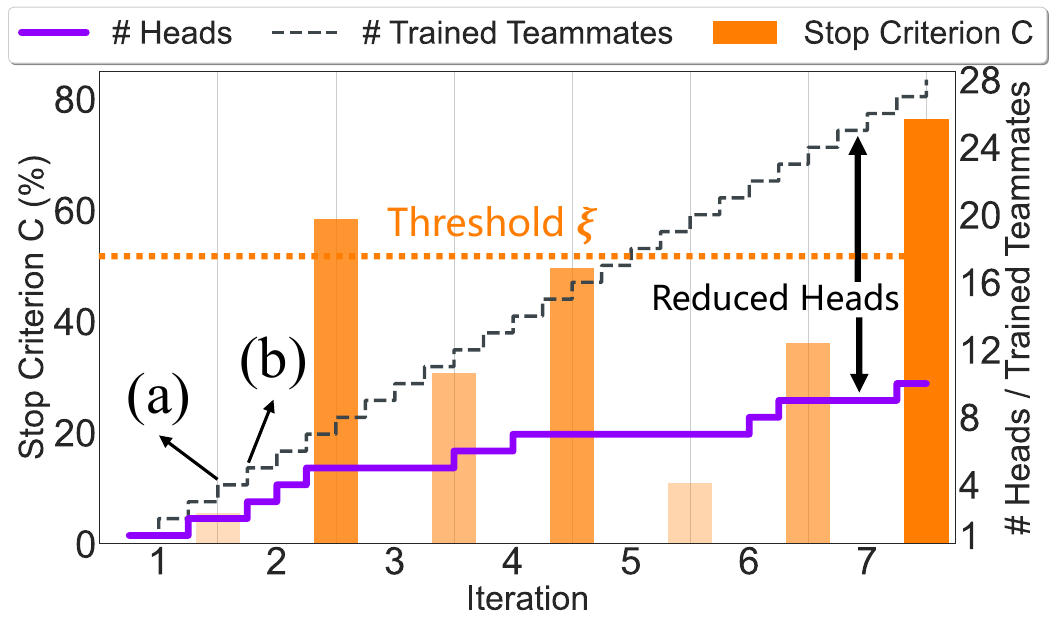}
    }
    \subfigure[]{
    \label{radar}
    \includegraphics[width=0.26\textwidth]{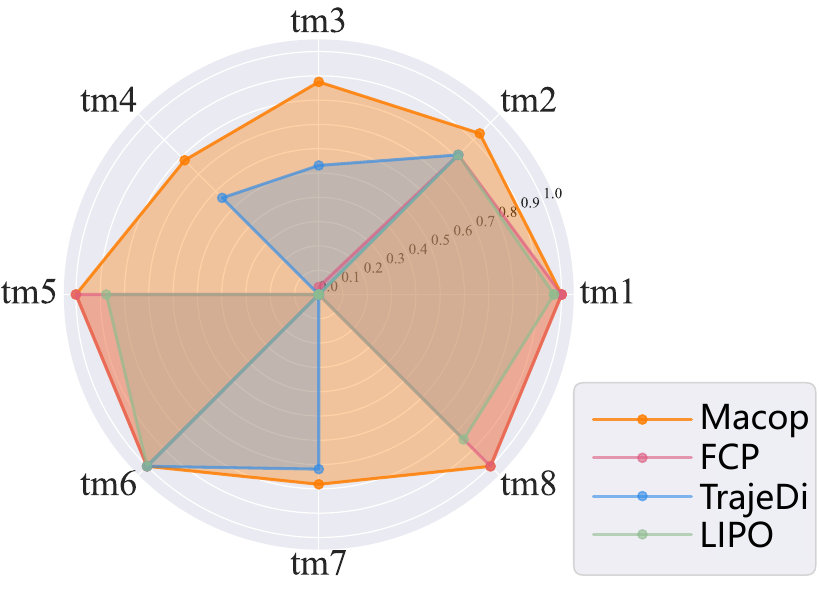}
    }
  \caption{Macop's learning process analysis. (a)(b) The self-play trajectories of the first four/five teammate groups. (c) The change of the number of trained teammate groups, the number of existing heads, and the stop criterion $C$ on each iteration. (d) Coordination performance comparison with different teammate groups in the \textit{evaluation set}.}
  \label{learning process analysis}
\end{figure*}

\subsection{Learning Process Analysis}

To gain a comprehensive understanding of Macop's functioning, it's essential to delve into its learning process, which involves generating incompatible teammates and refining controllable agents until convergence is achieved.
Fig.~\ref{learning process analysis} illustrates the process in PP1, showcasing key aspects, including the number of teammate groups generated, the number of existing heads, and the stop criterion $C$, all presented for each iteration (Fig.~\ref{process_plot}). 
In the first iteration, the teammate generation module produces a population of four distinct teammate groups, with three specializing in capturing the first prey and one focused on the second prey (Fig.~\ref{traj_4tm}). However, the population lacks desired diversity, as none of the groups learn to catch the remaining third prey. As for the controllable agents, they acquire the ability to collaborate with their teammates: Head 1 coordinates with those capturing the first prey, while Head 2 interacts with the group targeting the second prey.


During the second iteration, the teammate generation module generates new teammates incompatible with the controllable agents, expanding the coverage of the teammate policy space. As shown in Fig.~\ref{traj_5tm}, a new teammate group (identified as ``tm5'' in blue) successfully acquires the skill to capture the last prey, showcasing a completely novel behavior. Consequently, when the controllable agents complete their training with this new group, they establish a new head for better coordination.

The dynamic interplay between the adversarial teammate generation module and the training of controllable agents persists until the seventh iteration, resulting in an increased number of teammate groups and output heads. In this final iteration, the teammate generation module endeavors to generate seemingly ``incompatible'' teammates as it has throughout the training process, but it encounters failure. The generated teammate groups up to this point have already effectively covered a wide range of the teammate policy space. The controllable agents have successfully acquired the ability to coordinate with a sufficiently diverse array of teammates.
The newly generated teammate groups do not exhibit enough incompatibility, as indicated by the stop criterion surpassing the specified threshold $\xi$. This signifies that the cross-play performance between the controllable agents and these new ``incompatible'' teammates is comparable to the self-play performance of the teammate groups. It's worth noting that the $C$ value from the second iteration also exceeds the threshold, yet a minimum iteration count of 4 is enforced to ensure thorough exploration of the teammate policy space. This automated and self-regulating learning process within Macop concludes after the seventh iteration.
As a result of this process, Macop produces a notable set of 28 teammate groups with remarkable diversity, along with controllable agents that possess 10 heads. This is evidenced by their robust coordination abilities, which are prominently illustrated in Fig.~\ref{radar}.


\begin{figure}
  \centering
    \subfigure[Ablation Study]{
    \label{ablation}
    \includegraphics[width=0.33\textwidth]{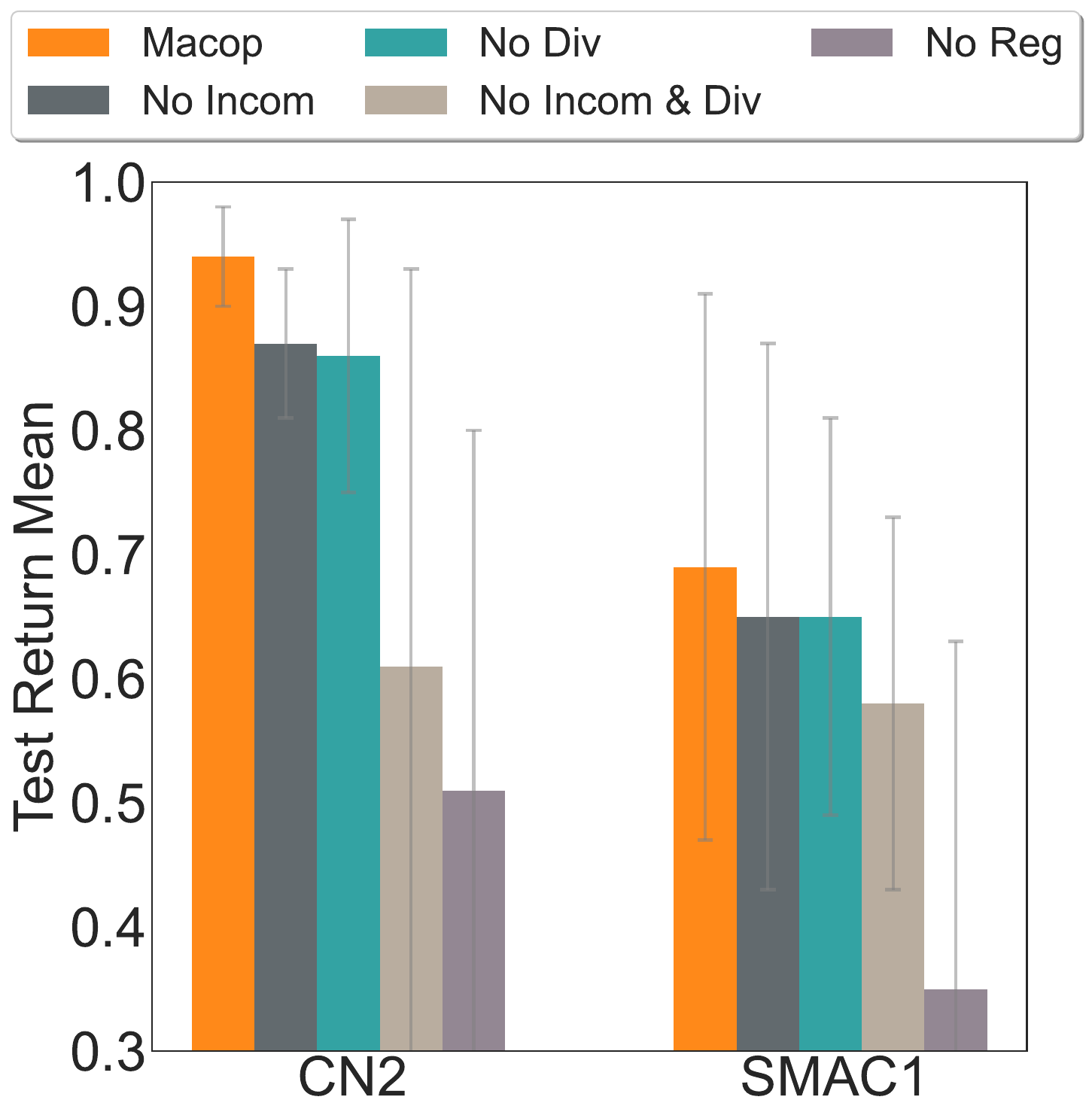}
    }
    \subfigure[Sensitivity of population size on PP2]{
    \label{sensitivity main}
    \includegraphics[width=0.31\textwidth]{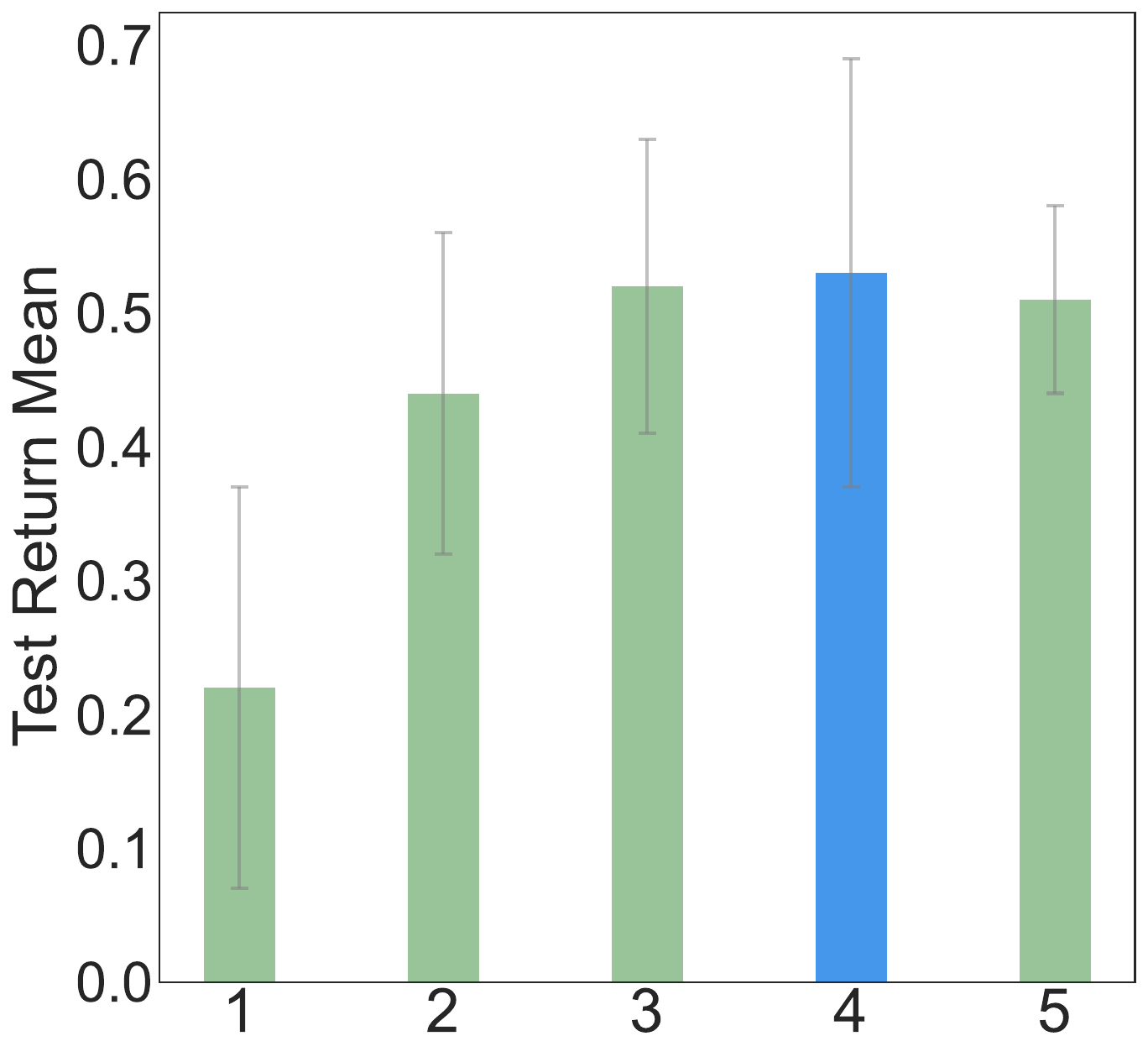}
    }
  \caption{Ablation and sensitivity studies.}
\end{figure}


\subsection{Ablation and Sensitivity Studies}\label{ablation and sensitivity}
We here conduct ablation studies on CN2 and SMAC1 to comprehensively assess the impacts of multiple modules. \textit{No Incom}, \textit{No Div}, and \textit{No Incom \& Div}, are derived by setting $\alpha_{\text{incom}} = 0$, $\alpha_{\text{div}} = 0$, and $\alpha_{\text{incom}} = \alpha_{\text{div}} = 0$, respectively.
Furthermore, we examine the impact of $\mathcal{L}_{\text{reg}}$, and designate this variant as \textit{No Reg} to explore the effects of regularization on the backbone network $\phi$.
To ensure a fair comparison, we incorporate the teammate groups generated by the four ablations into the \textit{evaluation set}. The results, as illustrated in Fig.~\ref{ablation}, reveal essential insights into the functioning of Macop. Removing $\mathcal{L}_{\text{incom}}$ or $\mathcal{L}_{\text{div}}$ leads to a performance degradation compared to the complete Macop, highlighting the significant contributions to the teammate diversity. Moreover, No Incom \& Div exhibits a substantial performance degradation, verifying the necessity of actively generating diverse teammates, instead of relying solely on random network initialization.
Furthermore, No Reg demonstrates the poorest performance among all the variants. The absence of regularization on the backbone network undermines the controllable agents' continual learning ability, weakening their coordination capability with diverse teammates.
These findings emphasize that each module plays an indispensable role in Macop.

As Macop includes multiple hyperparameters, we conduct experiments to investigate their sensitivity. The teammate groups generated by different hyperparameter settings are also incorporated into the \textit{evaluation set} for a fair comparison. One important hyperparameter is the population size $n_p$. On one hand, with a very small population, Macop cannot cover the teammate policy space in an efficient manner. On the other hand, setting the population size to an excessively large number will unnecessarily increase the running time of Macop, reducing the overall efficiency. As shown in Fig.~\ref{sensitivity main}, we can find that when $n_p \le 4$, the performance of Macop does improve with increasing population size. However, there is no further improvement as we continue to increase $n_p$, proving that $n_p = 4$ is the best setting in scenario PP2. More detailed analysis of other important hyperparameters is provided in App.~\ref{completesensiti}.


\section{Final Remarks}

We propose a novel approach to multi-agent policy learning called Macop, which is designed to enhance the coordination abilities of controllable agents when working with diverse teammates. Our approach starts by framing the problem as an CT-Dec-POMDP. This framework entails training the ego-system with sequentially generated groups of teammates until convergence is achieved. Empirical results obtained across various environments, compared against multiple baseline methods, provide strong evidence of its effectiveness.
Looking ahead, in scenarios where we operate under a few-shot setting and need to collect some trajectories for an optimal head during policy deployment, developing mechanisms such as context-based recognition could be a potential future solution.  Additionally, an intriguing direction for future research involves harnessing the capabilities of large language models~\cite{wang2023survey} like ChatGPT~\cite{liu2023summary} to expedite the learning process and further enhance the generalization capabilities of our approach.

\bibliographystyle{ACM-Reference-Format}
\bibliography{sample-base}

\appendix

\clearpage
\section{Appendix}

\subsection{Related Work} \label{relatedwork}
Many real-world problems can often be effectively modeled as multi-agent systems~\cite{DBLP:journals/access/DorriKJ18}. Harnessing the problem-solving prowess of deep reinforcement learning~\cite{Wang2020DeepRL}, Multi-Agent Reinforcement Learning (MARL)~\cite{zhang2021multi} has achieved significant success across diverse domains. Moreover, when agents share a common goal, the problem falls under the category of cooperative MARL~\cite{oroojlooy2023review}, showing impressive progress in various areas such as path finding~\cite{sartoretti2019primal}, active voltage control~\cite{DBLP:conf/nips/WangXGSG21}, and dynamic algorithm configuration~\cite{xue2022multiagent}. Within the wide range of research domains, building an agent capable of cooperating and coordinating with different or even previously unknown teammates remains a fundamental challenge~\cite{dafoe2021cooperative}. Recent approaches, such as ad-hoc teamwork (AHT)~\cite{mirsky2022survey}, zero-shot coordination (ZSC)~\cite{treutlein2021new}, and few-shot teamwork (FST)~\cite{fosong2022few}, have been developed to address this challenge. AHT involves designing agents that can effectively collaborate with new teammates without prior coordination~\cite{stone2010ad}, including aspects like teammate type inference~\cite{barrett2015cooperating,chen2020aateam}, changing point detection~\cite{Ravula2019AdHT}, partial observation solving~\cite{gu2021online}, and adversarial training~\cite{fujimoto2022ad}. ZSC addresses the problem of independently training two or more agents in a cooperative game, ensuring that their strategies are compatible and achieve high returns when paired together at test time~\cite{treutlein2021new}. This line of work includes diversity measurement~\cite{lupu2021trajectory,zhao2023maximum}, training paradigm design~\cite{hu2020other,strouse2021collaborating,xue2022heterogeneous}, investigation of human bias~\cite{yu2023learning}, generation of diverse teammates~\cite{lipo}, and policy co-evolution for heterogeneous settings~\cite{xue2022heterogeneous}. Furthermore, in the FST setting~\cite{fosong2022few}, skilled agents trained as a team to complete one task are combined with skilled agents from different tasks, and they must collectively learn to adapt to an unseen but related tasks~\cite{ding2023coordination}.

One factor for high coordination generalization agents is fostering diversity among teammates, one approach is Fictitious Co-Play (FCP)~\cite{strouse2021collaborating}. This method involves training a controllable agent partner to act as the optimal response to a group of self-play agents and their past checkpoints over the course of training. However, FCP lacks an explicit mechanism to enforce diversity among teammates. To address this limitation, other approaches have been proposed to encourage diversity in teammate behavior. For example, TrajeDi~\cite{lupu2021trajectory} introduces an auxiliary loss term that enhances team diversity by evaluating the trajectories generated by different teams. MEP~\cite{zhao2023maximum} presents a maximum entropy population-based training scheme that mitigates distributional shifts when collaborating with previously unencountered partners.
Another approach, LIPO~\cite{lipo}, focuses on cultivating diverse behaviors by assessing policy compatibility, which measures the effectiveness of policies in coordinating actions. Alternatively, MAZE~\cite{xue2022heterogeneous} addresses the challenge of heterogeneous coordination and introduces a coevolution-based method involving the simultaneous evolution of two populations: agents and partners. Despite the merits of these approaches, they often assume that the process of generating teammates is independent from the optimization objective of coordination policies. This assumption limits their ability to encompass a wide range of teammate policies, subsequently undermining their performance in effectively coordinating actions with previously unseen teammates.

Another related topic is continual reinforcement learning~\cite{khetarpal2022towards,abel2023definition}, which has garnered increasing attention in recent years, focusing on enabling agents to learn sequentially across different tasks. Various methods have been proposed to tackle this challenge. Among the previous works, EWC~\cite{kirkpatrick2017overcoming} employs $l_2$-distance-based weight regularization with previously learned weights, necessitating additional supervision information to select a specific Q-function head and task-specific exploration schedules for different tasks. On the other hand, CLEAR~\cite{Rolnick2018ExperienceRF} is a task-agnostic method that does not require task information during the continual learning process. It stores a large experience replay buffer and addresses the forgetting problem by sampling data from previous tasks.
Other approaches, such as HyperCRL~\cite{huang2021continual} and \cite{kessler2022surprising}, utilize learned world models to enhance the efficiency of continual learning. To address scalability concerns in scenarios with a large number of tasks, LLIRL~\cite{DBLP:journals/tnn/WangCD22} decomposes the task space into subsets and employs the Chinese Restaurant Process to expand the neural network, making continual reinforcement learning more efficient.
OWL~\cite{DBLP:conf/aaai/KesslerPBZR22} is a recently proposed approach that uses a multi-head architecture to achieve high learning efficiency. CSP~\cite{gaya2023building}, on the other hand, incrementally builds a subspace of policies for training a reinforcement learning agent on a sequence of tasks.
Regarding the multi-agent continual learning problem, \cite{nekoei2021continuous} investigate whether agents can coordinate with unseen agents by introducing a multi-agent learning testbed based on Hanabi. However, it only considers uni-modal coordination among tasks. In contrast, MACPro~\cite{yuan2023multi} proposes an approach for multi-agent continual coordination via progressive task contextualization. It obtains a factorized policy using shared feature extraction layers but separate independent task heads, each specializing in a specific class of tasks. Nevertheless, MACPro requires a handcrafted  environment for testing efficiency, which is impractical for real-world applications with unpredictable task scenarios.


\subsection{Proofs for Theorems} \label{ProofsforTheorems}
We first show that a reliable proxy effectively measures the dissimilarity between teammates' policies.

\begin{assumption}
    $P(s'|s, \boldsymbol{a})>0, \forall s, \boldsymbol{a}, s'.$
\end{assumption}
\begin{theorem}
    We define the TV divergence between teammates' policies as $ D_{TV}^{\min}(\boldsymbol{\pi}_{\text{tm}}^i||\boldsymbol{\pi}_{\text{tm}}^j)=\min_{\boldsymbol{\tau}_{\text{tm}}}D_{TV}(\boldsymbol{\pi}_{\text{tm}}^i(\cdot|\boldsymbol{\tau}_{\text{tm}})||\boldsymbol{\pi}_{\text{tm}}^j(\cdot|\boldsymbol{\tau}_{\text{tm}}))$, where $D_{TV}(p||q)=\frac{1}{2}\sum_{x}|p(x)-q(x)|\in[0, 1]$ is the total variantion divergence for the discrete probability distribution $p$ and $q$. We assume that $\min_{\boldsymbol{\tau}_{\text{tm}}, \boldsymbol{a}_{\text{tm}}}\boldsymbol{\pi}^x_{\text{tm}}(\boldsymbol{a}_{\text{tm}}|\boldsymbol{\tau}_{\text{tm}})=\delta>0, x\in \{i, j\}$. Then the following inequality holds:
    \begin{equation}
        \begin{aligned}
            &d(\boldsymbol{\pi}_{\text{tm}}^i, \boldsymbol{\pi}_{\text{tm}}^j)> \epsilon, \quad if \\
             D_{TV}^{\min}(\boldsymbol{\pi}_{\text{tm}}^i||\boldsymbol{\pi}_{\text{tm}}^j)&>\frac{k(k-1)\delta}{2} \min\{1-(1-\epsilon)^{\frac{1}{T}}, (1+\epsilon)^{\frac{1}{T}}-1\},\\
             \text{where}\quad k=|\mathcal{A}_{\text{tm}}|.
        \end{aligned}
    \end{equation}
    \label{thm1}
\end{theorem}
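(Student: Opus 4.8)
The plan is to prove the statement constructively: under the stated lower bound on $D_{TV}^{\min}$, I will exhibit a single trajectory $\boldsymbol{\tau}$ whose ratio product deviates from $1$ by more than $\epsilon$, which immediately forces $d(\boldsymbol{\pi}_{\text{tm}}^i,\boldsymbol{\pi}_{\text{tm}}^j)=\max_{\boldsymbol{\tau}}|1-\prod_{t=0}^{T-1} r_t|>\epsilon$, where I abbreviate $r_t=\boldsymbol{\pi}_{\text{tm}}^i(\boldsymbol{a}_{tm,t}|\boldsymbol{\tau}_{tm,t})/\boldsymbol{\pi}_{\text{tm}}^j(\boldsymbol{a}_{tm,t}|\boldsymbol{\tau}_{tm,t})$. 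A preliminary observation is that Assumption~1 together with $\boldsymbol{\pi}_{\text{tm}}^x\geq\delta>0$ guarantees that every state-action sequence has strictly positive probability, so the maximization over $\boldsymbol{\tau}$ genuinely ranges over all sequences I might construct, and exhibiting one good sequence suffices.

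The argument splits into a per-step lemma and a multiplicative amplification. First I would establish the per-step claim: at \emph{every} teammate context $\boldsymbol{\tau}_{\text{tm}}$, the hypothesis $D_{TV}(\boldsymbol{\pi}_{\text{tm}}^i(\cdot|\boldsymbol{\tau}_{\text{tm}})||\boldsymbol{\pi}_{\text{tm}}^j(\cdot|\boldsymbol{\tau}_{\text{tm}}))\geq D_{TV}^{\min}>\frac{k(k-1)\delta}{2}c'$ (writing $c'=\min\{1-(1-\epsilon)^{1/T},(1+\epsilon)^{1/T}-1\}$) yields an action whose ratio $r$ exceeds $1+c'$, and symmetrically an action whose ratio lies below $1-c'$. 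The clean route to the existence of a large ratio is the identity $\sum_{a\in S^+}\boldsymbol{\pi}_{\text{tm}}^j(a)(r_a-1)=D_{TV}$, where $S^+=\{a : r_a>1\}$ has at most $k-1$ elements (the $\boldsymbol{\pi}_{\text{tm}}^j$-weighted mean of the $r_a$ equals $1$, so they cannot all exceed it). Pigeonholing over these $\le k-1$ deviating actions and lower-bounding the occurring probabilities by $\delta$ converts the aggregate TV mass into a pointwise ratio bound, and this counting is exactly where the factor $\frac{k(k-1)\delta}{2}$ is produced; the statement for $S^-$ follows identically.

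Second, because the per-step guarantee holds \emph{uniformly} over contexts, I can build $\boldsymbol{\tau}$ greedily: at each of the $T$ steps, whatever history has been reached, I select the large-ratio action in a single fixed direction (all ``up'' or all ``down''). Choosing ``up'' gives $\prod_{t=0}^{T-1} r_t>(1+c')^T\geq(1+((1+\epsilon)^{1/T}-1))^T=1+\epsilon$, while choosing ``down'' gives $\prod_{t=0}^{T-1} r_t<(1-c')^T\le 1-\epsilon$. The role of the $\min$ in the threshold is precisely to make whichever target is cheaper reachable with the same per-step budget $c'$. Either way $|1-\prod_t r_t|>\epsilon$, which completes the construction.

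The step I expect to be the main obstacle is the per-step TV-to-ratio translation with the exact constant: passing from ``$D_{TV}$ is large'' to ``some single action has ratio far from $1$'' requires carefully combining the pigeonhole over the at most $k-1$ deviating actions with the uniform lower bound $\delta$ on action probabilities, and it is this counting that pins down the coefficient $k(k-1)\delta/2$; the subsequent amplification through $(1\pm c')^T$ is routine once $c'$ is defined via the $T$-th roots. A secondary point worth checking is that both $S^+$ and $S^-$ are nonempty at every context (which holds since $D_{TV}>0$ there), so that the chosen direction is always feasible at each step.
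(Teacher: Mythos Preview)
Your overall architecture matches the paper's exactly: establish a per-step ratio deviation from the TV lower bound, then use Assumption~1 to greedily build a trajectory that selects the deviating action at every step, and finally amplify via $(1\pm c')^T$. Your direct pigeonhole over $S^+=\{a:r_a>1\}$ is, if anything, a little cleaner than the paper's two-stage version (first pick $a^A=\arg\max_a|\pi^i(a)-\pi^j(a)|$, then use normalisation to locate a second action $a^B$ of the opposite sign).

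The gap is in the per-step TV-to-ratio conversion, precisely where you invoke $\delta$. From $\sum_{a\in S^+}\pi^j(a)(r_a-1)=D_{TV}$ and $|S^+|\le k-1$, pigeonhole yields some $a$ with $\pi^j(a)(r_a-1)>D_{TV}/(k-1)$, hence $r_a-1>D_{TV}/\big((k-1)\,\pi^j(a)\big)$. To reach $r_a-1>c'$ under the hypothesis $D_{TV}>\tfrac{k(k-1)\delta}{2}c'$ you would need $\pi^j(a)\le k\delta/2$; but $\delta$ is a \emph{lower} bound on action probabilities, so ``lower-bounding the occurring probabilities by~$\delta$'' pushes the inequality the wrong way. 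A two-action, one-step check makes this concrete: with $\pi^i=(0.7,0.3)$, $\pi^j=(0.5,0.5)$, $T=1$, $\epsilon=0.5$ one gets $\delta=0.3$, threshold $\tfrac{k(k-1)\delta}{2}c'=0.15$, and $D_{TV}=0.2>0.15$, yet $d=\max\{|1-1.4|,|1-0.6|\}=0.4<\epsilon$. So the step as written cannot deliver the stated constant. The paper's proof makes the identical leap (it writes ``$\pi^i(a^A)-\pi^j(a^A)>2x/k$ implies $r(a^A)>1+2x/(k\delta)$'', which would require $\pi^j(a^A)\le\delta$ rather than $\ge\delta$), so your proposal is faithful to the paper's argument but inherits the same flaw; the amplification step is fine once the per-step claim is granted.
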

\begin{proof}

    To simplify notation, we use $i$, $j$ to represent $\boldsymbol{\pi}_{\text{tm}}^i$ , $\boldsymbol{\pi}_{\text{tm}}^j$, respectively, and omit the subscript "tm" when no ambiguity arises.

    To prove the theorem, we can transform the problem into determine the value of $x$, $D_{TV}^{\min}(i||j)>x$, i.e., $\sum_{a}|\boldsymbol{\pi}^i(a|\tau)-\boldsymbol{\pi}^j(a|{\tau})|>2x$ holds for all possible trajectories ${\tau} $, so that $d(i, j)>\epsilon$ .

    Let $r(a|{\tau})=\frac{\boldsymbol{\pi}^i(a|\tau)}{\boldsymbol{\pi}^j(a|\tau)}$, and $a^A=\arg\max_{a}|\boldsymbol{\pi}^a(a|\tau)-\boldsymbol{\pi}^j(a|\tau)|$. Clearly, $|\boldsymbol{\pi}^i(a^A|\tau)-\boldsymbol{\pi}^j(a^A|\tau)|>\frac{2x}{k}$, where $k=|\mathcal{A}_{\text{tm}}|$. Without loss of generality, we eliminate the absolute value $|\cdot|$ and observe that two situations may arise for any trajectory $\tau$:
    \begin{enumerate}
        \item $\boldsymbol{\pi}^i(a^A|\tau)-\boldsymbol{\pi}^j(a^A|\tau)>\frac{2x}{k}$, which implies $r(a^A|\tau)>1+\frac{2x}{k\delta}$.
        \item $\boldsymbol{\pi}^i(a^A|\tau)-\boldsymbol{\pi}^j(a^A|\tau)<-\frac{2x}{k}$, which implies $r(a^A|\tau)<1-\frac{2x}{k\delta}$.
    \end{enumerate}
    
    To prove that $d(i, j)>\epsilon$, we need to guarantee that $d(i, j)=|1-\prod_{t=0}^{T-1}r(a_t|\tau_t)|>\epsilon$. Equivalently, we seek to prove that $\exists \tau$ such that $\prod_{t=0}^{T-1}r(a_t|\tau_t)<1-\epsilon$ or $\prod_{t=0}^{T-1}r(a_t|\tau_t)>1+\epsilon$. However, relying on the above "or" constraints for each local point $\tau_t$ does not guarantee a global constraint for $\prod_{t=0}^{T-1}r(a_t|\tau_t)$. To address this, we proceed to derive more general constraints.

    Considering the normalization of the policy, we have $\sum_a (\boldsymbol{\pi}^i(a|\tau)-\boldsymbol{\pi}^j(a|\tau))=0$. Under constraint (1), i.e., $\boldsymbol{\pi}^i(a^A|\tau)-\boldsymbol{\pi}^j(a^A|\tau)>\frac{2x}{k}$, it follows that $\sum_{a, a\neq a^A} (\boldsymbol{\pi}^i(a|\tau)-\boldsymbol{\pi}^j(a|\tau))<-\frac{2x}{k}$. Let $a^{B1}=\arg\min_{a}\boldsymbol{\pi}^i(a|\tau)-\boldsymbol{\pi}^j(a|\tau)$, and we obtain $\boldsymbol{\pi}^i(a^{B1}|\tau)-\boldsymbol{\pi}^j(a^{B1}|\tau)<-\frac{2x}{k(k-1)}$, which leads to $r(a^{B1}|\tau)<1-\frac{2x}{k(k-1)\delta}$. Similarity, for constraint (2), let $a^{B2}=\arg\max_{a}\boldsymbol{\pi}^i(a|\tau)-\boldsymbol{\pi}^j(a|\tau)$, we have $r(a^{B2}|\tau)>1+\frac{2x}{k(k-1)\delta}$.

    Combining the above constraints, we will find that for any $\tau$, one of the following inequalities always hold:

    \begin{enumerate}
        \item $r(a^A|\tau)>1+\frac{2x}{k\delta}$ and $r(a^{B1}|\tau)<1-\frac{2x}{k(k-1)\delta}$.
        \item $r(a^A|\tau)<1-\frac{2x}{k\delta}$ and $r(a^{B2}|\tau)>1+\frac{2x}{k(k-1)\delta}$.
    \end{enumerate}

    Based on the assumption, for any trajectory $\tau$ in $d(i, j)$, we can transform the $(\tau_t, a_t, \tau_{t+1})$ into $(\tau_t, a', \tau_{t+1}), \forall a'$ thus reconstructing a possible trajectory. This allows us to assign $r(a|\tau_t)$ either a lower bound or an upper bound for any $t$.

    Taking the lower bound as an example, we solve the following inequality:
    \begin{equation}
        \begin{aligned}
            \prod_{t=0}^{T-1} r(a_t|\tau_t) \geq & (1+\frac{2x}{k\delta})^p (1+\frac{2x}{k(k-1)\delta})^{T-p}\\
            > & (1+\frac{2x}{k(k-1)\delta})^{T}\\
            \geq& 1+\epsilon, 
        \end{aligned}
    \end{equation}
    and derive that $x=\frac{k(k-1)\delta}{2}((1+\epsilon)^{\frac{1}{T}}-1)$. Similarity, by solving the inequality under the upper bound constraint, we obtain $x=\frac{k(k-1)\delta}{2}(1-(1-\epsilon)^{\frac{1}{T}})$. Combining the two solutions, we finally prove that if $D_{TV}^{\min}(\boldsymbol{\pi}_{\text{tm}}^i||\boldsymbol{\pi}_{\text{tm}}^j)>\frac{k(k-1)\delta}{2} \min\{1-(1-\epsilon)^{\frac{1}{T}}, (1+\epsilon)^{\frac{1}{T}}-1\}$, then $d(\boldsymbol{\pi}_{\text{tm}}^i, \boldsymbol{\pi}_{\text{tm}}^j)> \epsilon$.
\end{proof}
The idea behind Thm.~\ref{thm1} is based on the fact that a significant total variation divergence tends to result in dissimilarities between teammates' policies. 
Due to the computational complexity associated with directly calculating the divergence concerning the trajectory distribution, a more practical approach is to maximize the total variation divergence instead.

Next, we highlight the relationship between total variation divergence and Jensen-Shannon divergence (JSD)~\cite{jsd, lin1991divergence}: $\text{JSD}(p||q)$ $\leq D_{TV}(p||q)$, indicating that JSD serves as a lower bound for total variation divergence. The expression for JSD is given by $\text{JSD}(p||q)=\frac{1}{2}(D_{KL}(p||\frac{p+q}{2})+D_{KL}(q||\frac{p+q}{2}))$ and $D_{KL}$ is the Kullback-Leibler (KL) divergence. While the motivation behind this relationship is rooted in theory, it becomes computationally inefficient to compute the maximum JSD between teammates' policies in the population. Consequently, we adopt a heuristic approximation, similar to TRPO~\cite{trpo}, and considers the average, non-pair-wise JSD:
\begin{equation}
    \begin{aligned}
        \mathcal{L}_{\text{div}}=&\mathbb{E}_{s}[\text{JSD}(\{\boldsymbol{\pi}_{\text{tm}}^{i}\}_{i=1}^{n_p})]\\
        =&\mathbb{E}_{s}[\frac{1}{n_p}\sum_{i=1}^{n_p}D_{KL}(\boldsymbol{\pi}_{\text{tm}}^i(\cdot|s)||\bar{\boldsymbol{\pi}}_{\text{tm}}(\cdot|s))],
    \end{aligned}
    \label{loss_div_app}
\end{equation}
where $\bar{\boldsymbol{\pi}}_{\text{tm}}(\cdot|s)=\frac{1}{n_p}\sum_{i=1}^{n_p}\boldsymbol{\pi}_{\text{tm}}^i(\cdot|s)$ is the average policy of the population.

\begin{assumption}
    $0<G(\boldsymbol{\tau})\leq 1, \forall\boldsymbol{\tau}.$
\end{assumption}
\begin{theorem}
    Given the controllable agents $\boldsymbol{\pi}_{\text{ego}}$ and teammate policy $\boldsymbol{\pi}_{\text{tm}}$, $\forall \boldsymbol{\pi}_{\text{tm}}'$,  $\boldsymbol{\pi}_{\text{tm}}, \boldsymbol{\pi}_{\text{tm}}'$ are $\epsilon-$similar policies. Then we have
    $(1-\epsilon)\mathcal{J}(\langle\boldsymbol{\pi}_{\text{ego}},\boldsymbol{\pi}_{\text{tm}})\leq\mathcal{J}(\langle\boldsymbol{\pi}_{\text{ego}}, \boldsymbol{\pi}_{\text{tm}}'\rangle)\leq (1+\epsilon)\mathcal{J}(\langle\boldsymbol{\pi}_{\text{ego}},\boldsymbol{\pi}_{\text{tm}}\rangle)$.
    \label{thm2_app}
\end{theorem}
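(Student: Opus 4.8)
The plan is to reduce Thm.~\ref{thm2_app} to a pointwise comparison of the two trajectory distributions induced by $\langle\boldsymbol{\pi}_{\text{ego}},\boldsymbol{\pi}_{\text{tm}}\rangle$ and $\langle\boldsymbol{\pi}_{\text{ego}},\boldsymbol{\pi}_{\text{tm}}'\rangle$, and then to integrate that comparison against the return. First I would write the objective explicitly as a sum over joint trajectories, $\mathcal{J}(\langle\boldsymbol{\pi}_{\text{ego}},\boldsymbol{\pi}_{\text{tm}}'\rangle)=\sum_{\boldsymbol{\tau}}P(\boldsymbol{\tau}\mid\langle\boldsymbol{\pi}_{\text{ego}},\boldsymbol{\pi}_{\text{tm}}'\rangle)\,G(\boldsymbol{\tau})$, and identically for $\boldsymbol{\pi}_{\text{tm}}$, so that the two quantities to be related differ only through the trajectory probabilities while sharing the same weights $G(\boldsymbol{\tau})$.

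Next I would apply Def.~\ref{def2} with the fixed complementary policy taken to be $\boldsymbol{\pi}_{\text{ego}}$ itself, which is admissible precisely because $\boldsymbol{\pi}_{\text{ego}}\in\Pi_{\text{tm}}^c$ plays the role of $\bar{\boldsymbol{\pi}}$ there. Since $\boldsymbol{\pi}_{\text{tm}}$ and $\boldsymbol{\pi}_{\text{tm}}'$ are $\epsilon$-similar, the condition $d(\boldsymbol{\pi}_{\text{tm}},\boldsymbol{\pi}_{\text{tm}}')\le\epsilon$ unwinds, via the equivalent two-sided form stated at the end of Def.~\ref{def2}, into an envelope on the ratio of the two trajectory probabilities that holds for every $\boldsymbol{\tau}$. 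Orienting this ratio so that the $\boldsymbol{\pi}_{\text{tm}}'$-distribution sits in the numerator yields the per-trajectory bound $(1-\epsilon)P(\boldsymbol{\tau}\mid\langle\boldsymbol{\pi}_{\text{ego}},\boldsymbol{\pi}_{\text{tm}}\rangle)\le P(\boldsymbol{\tau}\mid\langle\boldsymbol{\pi}_{\text{ego}},\boldsymbol{\pi}_{\text{tm}}'\rangle)\le(1+\epsilon)P(\boldsymbol{\tau}\mid\langle\boldsymbol{\pi}_{\text{ego}},\boldsymbol{\pi}_{\text{tm}}\rangle)$, valid for all $\boldsymbol{\tau}$.

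The final step is to multiply this chain of inequalities by $G(\boldsymbol{\tau})$ and sum over all trajectories. Here the standing assumption $0<G(\boldsymbol{\tau})\le 1$ does the essential work: because $G(\boldsymbol{\tau})$ is nonnegative, the multiplication preserves the direction of both inequalities termwise, and summation together with the two expansions of $\mathcal{J}$ from the first step delivers exactly $(1-\epsilon)\mathcal{J}(\langle\boldsymbol{\pi}_{\text{ego}},\boldsymbol{\pi}_{\text{tm}}\rangle)\le\mathcal{J}(\langle\boldsymbol{\pi}_{\text{ego}},\boldsymbol{\pi}_{\text{tm}}'\rangle)\le(1+\epsilon)\mathcal{J}(\langle\boldsymbol{\pi}_{\text{ego}},\boldsymbol{\pi}_{\text{tm}}\rangle)$ by linearity.

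I expect the main obstacle to be conceptual bookkeeping rather than computation: one must track which teammate occupies the numerator versus the denominator in the dissimilarity ratio of Def.~\ref{def2}, since the induced one-sided bounds on $P(\boldsymbol{\tau}\mid\langle\boldsymbol{\pi}_{\text{ego}},\boldsymbol{\pi}_{\text{tm}}'\rangle)$ must emerge in the direction matching the claimed constants $1\pm\epsilon$. The other point that must be stated explicitly is that nonnegativity of the return is exactly what licenses pushing the pointwise probability bounds through the expectation; were $G$ allowed to be negative on some trajectories, the inequalities would flip there and the clean sandwich would break.
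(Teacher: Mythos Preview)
Your proposal is correct and follows essentially the same route as the paper's proof: both expand $\mathcal{J}$ as an expectation over trajectories, invoke Def.~\ref{def2} with $\boldsymbol{\pi}_{\text{ego}}$ as the complementary policy to obtain the two-sided bound on the trajectory-probability ratio, multiply by the nonnegative return $G(\boldsymbol{\tau})$, and pass to the expectation. The only cosmetic difference is that the paper writes the importance-weight ratio $\tfrac{P(\boldsymbol{\tau}\mid\langle\boldsymbol{\pi}_{\text{ego}},\boldsymbol{\pi}_{\text{tm}}'\rangle)}{P(\boldsymbol{\tau}\mid\langle\boldsymbol{\pi}_{\text{ego}},\boldsymbol{\pi}_{\text{tm}}\rangle)}$ explicitly before bounding it, whereas you bound the probabilities directly; these are equivalent.
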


\begin{proof}
    From the definition of the expected return of a policy, we have
    \begin{equation}
        \begin{aligned}
        \mathcal{J}(\langle\boldsymbol{\pi}_{\text{ego}}, \boldsymbol{\pi}_{\text{tm}}'\rangle)=
        &\mathbb E_{\boldsymbol{\tau}\sim\rho(\langle\boldsymbol{\pi}_{\text{ego}},\boldsymbol{\pi}_{\text{tm}}'\rangle)}[G(\boldsymbol{\tau})]\\
        =&\int_{\boldsymbol{\tau}} P(\boldsymbol{\tau}|\langle\boldsymbol{\pi}_{\text{ego}},\boldsymbol{\pi}_{\text{tm}}'\rangle) G(\boldsymbol{\tau})\\
        =&\int_{\boldsymbol{\tau}} \frac{P(\boldsymbol{\tau}|\langle\boldsymbol{\pi}_{\text{ego}},\boldsymbol{\pi}_{\text{tm}}\rangle)}{P(\boldsymbol{\tau}|\langle\boldsymbol{\pi}_{\text{ego}},\boldsymbol{\pi}_{\text{tm}}\rangle)}  P(\boldsymbol{\tau}|\langle\boldsymbol{\pi}_{\text{ego}},\boldsymbol{\pi}_{\text{tm}}'\rangle)G(\boldsymbol{\tau})\\
        =&\int_{\boldsymbol{\tau}} P(\boldsymbol{\tau}|\langle\boldsymbol{\pi}_{\text{ego}},\boldsymbol{\pi}_{\text{tm}}\rangle)\frac{P(\boldsymbol{\tau}|\langle\boldsymbol{\pi}_{\text{ego}},\boldsymbol{\pi}_{\text{tm}}'\rangle)}{P(\boldsymbol{\tau}|\langle\boldsymbol{\pi}_{\text{ego}},\boldsymbol{\pi}_{\text{tm}}\rangle)} G(\boldsymbol{\tau})\\
        =&\mathbb E_{\boldsymbol{\tau}\sim\rho(\langle\boldsymbol{\pi}_{\text{ego}},\boldsymbol{\pi}_{\text{tm}}\rangle)}[\frac{P(\boldsymbol{\tau}|\langle\boldsymbol{\pi}_{\text{ego}},\boldsymbol{\pi}_{\text{tm}}'\rangle)}{P(\boldsymbol{\tau}|\langle\boldsymbol{\pi}_{\text{ego}},\boldsymbol{\pi}_{\text{tm}}\rangle)}G(\boldsymbol{\tau})].
        \end{aligned}
    \end{equation}
    From the definition of $\epsilon$-similar policies, we know that $1-\epsilon\leq\frac{P(\boldsymbol{\tau}|\langle\boldsymbol{\pi}_{\text{ego}},\boldsymbol{\pi}_{\text{tm}}'\rangle)}{P(\boldsymbol{\tau}|\langle\boldsymbol{\pi}_{\text{ego}},\boldsymbol{\pi}_{\text{tm}}\rangle
    )}\leq 1+\epsilon$, which implies that
    \begin{equation}
        \begin{aligned}
        (1-\epsilon)G(\boldsymbol{\tau})\leq\frac{P(\boldsymbol{\tau}|\langle\boldsymbol{\pi}_{\text{ego}},\boldsymbol{\pi}_{\text{tm}}'\rangle)}{P(\boldsymbol{\tau}|\langle\boldsymbol{\pi}_{\text{ego}},\boldsymbol{\pi}_{\text{tm}}\rangle)}G(\boldsymbol{\tau})\leq (1+\epsilon)G(\boldsymbol{\tau}).\\
        \end{aligned}
    \end{equation}

    Further more, we observe the lower and upper bound for the expected return $\mathcal{J}(\langle\boldsymbol{\pi}_{\text{ego}}, \boldsymbol{\pi}_{\text{tm}}'\rangle)$:
    \begin{equation}
        \begin{aligned}
        \mathcal{J}(\langle\boldsymbol{\pi}_{\text{ego}}, \boldsymbol{\pi}_{\text{tm}}'\rangle)\geq& 
        (1-\epsilon)\mathbb E_{\boldsymbol{\tau}\sim\rho(\langle\boldsymbol{\pi}_{\text{ego}},\boldsymbol{\pi}_{\text{tm}}\rangle)}[G(\boldsymbol{\tau})]\\
        \mathcal{J}(\langle\boldsymbol{\pi}_{\text{ego}}, \boldsymbol{\pi}_{\text{tm}}'\rangle)\leq& (1+\epsilon)\mathbb E_{\boldsymbol{\tau}\sim\rho(\langle\boldsymbol{\pi}_{\text{ego}},\boldsymbol{\pi}_{\text{tm}}\rangle)}[G(\boldsymbol{\tau})].\\
        \end{aligned}
    \end{equation}
    By replacing $\mathbb E_{\boldsymbol{\tau}\sim\rho(\langle\boldsymbol{\pi}_{\text{ego}},\boldsymbol{\pi}_{\text{tm}}\rangle)}[G(\boldsymbol{\tau})]$ with $\mathcal{J}(\langle\boldsymbol{\pi}_{\text{ego}}, \boldsymbol{\pi}_{\text{tm}}\rangle)$, we finally derive that $(1-\epsilon)\mathcal{J}(\langle\boldsymbol{\pi}_{\text{ego}},\boldsymbol{\pi}_{\text{tm}})\leq\mathcal{J}(\langle\boldsymbol{\pi}_{\text{ego}}, \boldsymbol{\pi}_{\text{tm}}'\rangle)\leq (1+\epsilon)\mathcal{J}(\langle\boldsymbol{\pi}_{\text{ego}},\boldsymbol{\pi}_{\text{tm}}\rangle)$.
\end{proof}

\subsection{More Details of Macop}
\label{macopdetails}
\subsubsection{The Overall Workflow of Macop}

We introduce the pseudo-codes for both the training and testing phases of Macop in this part, in Alg.~\ref{algo1} and Alg.~\ref{algo2}.

~\\

\begin{breakablealgorithm}
\caption{Macop: Training Phase}\label{algo1}
\begin{algorithmic}[1]

    \Statex \textbf{Input:} controllable agents $\boldsymbol{\pi}_{\text{ego}}$, population size $n_p$, minimum iteration $N_{\min}$, maximum iteration $N_{\max}$, stopping threshold $\xi$, head expansion threshold $\lambda$.
    \State Initialize and train teammate population $\mathcal{P}_{\text{tm}}^0$ with $\mathcal{L}_{\text{tm}}(\alpha_{\text{incom}} = 0)$
    \For{$iter=1$ to $N_{\max}$}
        \If{$iter = 1$}
            \State $\mathcal{P}_{\text{tm}}^{iter}\leftarrow \mathcal{P}_{\text{tm}}^{iter-1}$
        \Else
            \State $\mathcal{P}_{\text{tm}}^{\text{parent}}\leftarrow \mathcal{P}_{\text{tm}}^{iter-1}$
            
            \State $\mathcal{P}_{\text{tm}}^{\text{offspring}}\leftarrow \text{Mutation}(\boldsymbol{\pi}_{\text{ego}}, \mathcal{P}_{\text{tm}}^{iter-1})$
            \State $\mathcal{P}_{\text{tm}}^{iter}\leftarrow \text{Selection}(\mathcal{P}_{\text{tm}}^{\text{parent}}, \mathcal{P}_{\text{tm}}^{\text{offspring}})$
        \EndIf
        \If{$iter \ge N_{\min}$}
            \State $C\leftarrow \frac{\min_{i}\mathcal{J}(\langle\boldsymbol{\pi}_{\text{ego}},\boldsymbol{\pi}_{\text{tm}}^i \rangle)}{\frac{1}{n_p}\sum_{i=1}^{n_p} \mathcal{J}_{\text{sp}}( \boldsymbol{\pi}_{\text{tm}}^i)},~\boldsymbol{\pi}_{\text{tm}}^i \in \mathcal{P}_{\text{tm}}$
            \If{$C\geq\xi$}
                \State break \Comment{terminate}
            \EndIf
        \EndIf
        \For{$\boldsymbol{\pi}_{\text{tm}}\in\mathcal{P}_{\text{tm}}^{iter}$}
            \State $\boldsymbol{\pi}_{\text{ego}}\leftarrow\text{ContinualLearning}(\boldsymbol{\pi}_{\text{ego}}, \boldsymbol{\pi}_{\text{tm}}, \lambda)$
        \EndFor
    \EndFor
    \Return $\boldsymbol{\pi}_{\text{ego}}$

\end{algorithmic}
\end{breakablealgorithm}

\begin{breakablealgorithm}
\caption{Mutation}
\begin{algorithmic}[1]

    \Statex \textbf{Input:} controllable agents $\boldsymbol{\pi}_{\text{ego}}$, teammate population $\mathcal{P}_{\text{tm}}$.
    \State Initialize cross-play buffer $\{\mathcal{D}_{\text{tm}}^{\text{XP}, i}\}_{i=1}^{n_p}$, self-play buffer $\{\mathcal{D}_{\text{tm}}^{\text{SP}, i}\}_{i=1}^{n_p}$, complementary polices $\{\bar{\boldsymbol{\pi}}_{\text{ego}}^i\}_{i=1}^{n_p}$
    \State $t\leftarrow 0$
    \While{$t\leq t_{\text{tm}}$}
        \For{$\boldsymbol{\pi}_{\text{tm}}^i\in \mathcal{P}_{\text{tm}}$}
            \State $\boldsymbol{\tau}^{\text{XP}}\leftarrow env.rollout( \boldsymbol{\pi}_{\text{ego}}, \boldsymbol{\pi}_{\text{tm}}^i)$\Comment{xp trajectory}
            \State $t\leftarrow t+\boldsymbol{\tau}^{\text{XP}}.length$
            \State $\mathcal{D}_{\text{tm}}^{\text{XP}, i}\leftarrow\mathcal{D}_{\text{tm}}^{\text{XP}, i}\cup\{\boldsymbol{\tau}^{\text{XP}}\}$
            \State $\boldsymbol{\tau}^{\text{SP}}\leftarrow env.rollout(\bar{\boldsymbol{\pi}}_{\text{ego}}^i, \boldsymbol{\pi}_{\text{tm}}^i)$\Comment{sp trajectory}
            \State $t\leftarrow t+\boldsymbol{\tau}^{\text{SP}}.length$
            \State $\mathcal{D}_{\text{tm}}^{\text{SP}, i}\leftarrow\mathcal{D}_{\text{tm}}^{\text{SP}, i}\cup\{\boldsymbol{\tau}^{\text{SP}}\}$
        \EndFor
        \State Optimize $\mathcal{P}_{\text{tm}}$ via $\mathcal{L}_{\text{tm}}$ in Eqn.~\ref{loss_tm} \Comment{optimization}
    \EndWhile
    \State \Return $\mathcal{P}_{\text{tm}}$

\end{algorithmic}
\end{breakablealgorithm}

\begin{breakablealgorithm}
\caption{Selection}
\begin{algorithmic}[1]

    \Statex \textbf{Input:} parent population $\mathcal{P}_{\text{tm}}^{\text{parent}}$, offspring population $\mathcal{P}_{\text{tm}}^{\text{offspring}}$.
    \State $\mathcal{P}_{\text{tm}}\leftarrow \mathcal{P}_{\text{tm}}^{\text{parent}} \cup \mathcal{P}_{\text{tm}}^{\text{offspring}}$
    \State $\mathcal{P}_{1}\leftarrow\text{AscendSelect}(\mathcal{P}_{\text{tm}}, \text{``self-play return''}, \lfloor\frac{n_p}{2}\rfloor)$
    \State $\mathcal{P}_{\text{tm}}\leftarrow\mathcal{P}_{\text{tm}}\backslash\mathcal{P}_1$
    \State $\mathcal{P}_{2}\leftarrow\text{DescendSelect}(\mathcal{P}_{\text{tm}}, \text{``cross-play return''}, \lceil\frac{n_p}{2}\rceil)$
    \State $\mathcal{P}_{\text{tm}}\leftarrow\mathcal{P}_{\text{tm}}\backslash\mathcal{P}_2$
    \State \Return $\mathcal{P}_{\text{tm}}$

\end{algorithmic}
\end{breakablealgorithm}

\begin{breakablealgorithm}
\caption{Continual Learning}
\begin{algorithmic}[1]

    \Statex \textbf{Input:} controllable agents $\boldsymbol{\pi}_{\text{ego}}$, teammate group $\boldsymbol{\pi}_{\text{tm}}$, head expansion threshold $\lambda$.
    \State $f, \{h_i\}_{i=1}^m\leftarrow Split(\boldsymbol{\pi}_{\text{ego}})$\Comment{get backbone and heads}
    \State Initialize cross-play replay buffer $\mathcal{D}_{\text{ego}}^{\text{XP}}$, self-play replay buffer $\mathcal{D}_{\text{ego}}^{\text{SP}}$, complementary policy $\bar{\boldsymbol{\pi}}_{\text{tm}}$, new head $h^{\text{new}}$
    \State Compose $\boldsymbol{\pi}_{\text{ego}}^{\text{new}}$ with $f$ and $h^{\text{new}}$

    \State $t\leftarrow 0$
    \While{$t\leq t_{\text{ego}}$}
        \State $\boldsymbol{\tau}^{\text{XP}}\leftarrow env.rollout( \boldsymbol{\pi}_{\text{ego}}, \boldsymbol{\pi}_{\text{tm}})$\Comment{xp trajectory}
        \State $t\leftarrow t+\boldsymbol{\tau}^{\text{XP}}.length$
        \State $\mathcal{D}_{\text{ego}}^{\text{XP}}\leftarrow\mathcal{D}_{\text{ego}}^{\text{XP}}\cup\{\boldsymbol{\tau}^{\text{XP}}\}$
        \State $\boldsymbol{\tau}^{\text{SP}}\leftarrow env.rollout({\boldsymbol{\pi}}_{\text{ego}}, \bar{\boldsymbol{\pi}}_{\text{tm}})$\Comment{sp trajectory}
        \State $t\leftarrow t+\boldsymbol{\tau}^{\text{SP}}.length$
        \State $\mathcal{D}_{\text{tm}}^{\text{SP}, i}\leftarrow\mathcal{D}_{\text{tm}}^{\text{SP}, i}\cup\{\boldsymbol{\tau}^{\text{SP}}\}$
        \State Optimize $\boldsymbol{\pi}_{\text{ego}}^{\text{new}}$ via $\mathcal{L}_{\text{ego}}$ in Eqn.~\ref{loss_ego}
    \EndWhile
    \State Calculate empirical return $\hat R_{new}$ with $\langle\boldsymbol{\pi}_{\text{ego}}^{\text{new}}, \boldsymbol{\pi}_{\text{tm}}\rangle$
    \For{$h_i\in\{h_i\}_{i=1}^m$}
        \State Compose $\boldsymbol{\pi}_{\text{ego}}^i$ with $f$ and $h_i$
        \State Calculate empirical return $\hat R_i$ with $\langle\boldsymbol{\pi}_{\text{ego}}^i, \boldsymbol{\pi}_{\text{tm}}\rangle$
    \EndFor
    \If{$\frac{\hat R_{new}-\max_{i} \{\hat R_i\}_{i=1}^m}{\max_{i} \{\hat R_i\}_{i=1}^m}\geq \lambda$}
        \State Add new head $h^{\text{new}}$ to the policy $\boldsymbol{\pi}_{\text{ego}}$ \Comment{expand}
    \EndIf
    \State \Return $\boldsymbol{\pi}_{\text{ego}}$    
    
\end{algorithmic}
\end{breakablealgorithm}

\begin{breakablealgorithm}
    \caption{Macop: Testing Phase}\label{algo2}
\begin{algorithmic}[1]
    \Statex \textbf{Input:} controllable agents $\boldsymbol{\pi}_{\text{ego}}$, unknown teammate group $\boldsymbol{\pi}_{\text{tm}}$, number of samples $n_{meta}$.
    \State $f, \{h_i\}_{i=1}^m\leftarrow Split(\boldsymbol{\pi}_{\text{ego}})$\Comment{get backbone and heads}
    \State Initialize counter $\{n_i\}_{i=1}^m\leftarrow \{0\}_{i=1}^m$
    \State Initialize $\{\hat R_i\}_{i=1}^m\leftarrow\{0\}_{i=1}^m$
    \While{$Sum(\{n_i\}_{i=1}^m)\leq n_{meta}$}
        \State Choose one head $h_i$
        \State Compose $\boldsymbol{\pi}_{\text{ego}}^i$ with $f$ and $h_i$
        \State Calculate one episode return $R_i$ from $\langle\boldsymbol{\pi}_{\text{ego}}^i, \boldsymbol{\pi}_{\text{tm}}\rangle$
        \State $\hat R_i \leftarrow\hat R_i +\frac{R_i-\hat R_i}{n_i+1}$
        \State $n_i\leftarrow n_i+1$
    \EndWhile
    \State Determine the head index $i^*=\arg\max_{i^*}\{\hat R_i\}_{i=1}^m$
    \State Compose $\boldsymbol{\pi}_{\text{ego}}^{\text{eval}}$ with $f$ and $h_{i^*}$
    \State Evaluate with $\langle\boldsymbol{\pi}_{\text{ego}}^{\text{eval}}, \boldsymbol{\pi}_{\text{tm}}\rangle$
\end{algorithmic}
\end{breakablealgorithm}

\subsubsection{The Architecture, Infrastructure, and Hyperparameters Choices of Macop}

We implement Macop based on the PyMARL\footnote{\url{https://github.com/oxwhirl/pymarl}}~\cite{pymarl} codebase. For agent network architecture, we apply the technique of parameter sharing, so self-play and cross-play can be easily implemented by inputting different agent ids into the agent network. And we design the feature extraction backbone $f_{\phi}$ as a 2-layer MLP and a GRU~\cite{DBLP:conf/emnlp/ChoMGBBSB14}, and the policy head $h_{\psi_i}$ as a 2-layer MLP. The hidden dimension is 64 for both MLP and GRU. The head MLP takes the output of the backbone as input and outputs the Q-value of all actions. The individual Q-values of each agent are then fed into the mixing network to calculate the joint Q-value, according to the existing MARL methods. We select VDN~\cite{vdn} for environment LBF, PP, CN, and QMIX~\cite{qmix} for environment SMAC. We adopt Adam~\cite{DBLP:journals/corr/KingmaB14} as the optimizer with learning rate $5 \times 10^{-4}$. The whole framework is trained end-to-end with collected episodic data on NVIDIA GeForce RTX 2080 Ti and 3090 GPUs with a time cost of about 10 hours in LBF, PP, CN, SMAC1 scenarios, and about 30 hours in SMAC2 scenario.

We use the default hyperparameter settings of PyMARL, e.g., the batch size of trajectories used to calculate the temporal difference error is set to the default value 32. The selection of the additional hyperparameters introduced in our approach, e.g., the size of each teammate population, is listed in Tab.~\ref{hyperparameters}.

\begin{table*}
\small
\setlength{\tabcolsep}{6pt}
\renewcommand{\arraystretch}{1.3}
    \centering
    \caption{Hyperparameters in experiment.}
    \resizebox{0.99\textwidth}{!}{\begin{tabular}{c|c}
        \toprule
        Hyperparameter  & Value \\
        \midrule
        $n_p$ (Population size) & 4 (LBF, PP, CN), 2 (SMAC) \\
        $\alpha_{\text{incom}}$ (Coefficient of $\mathcal{L}_{\text{incom}}$) & 0.1 \\
        $\alpha_{\text{div}}$ (Coefficient of $\mathcal{L}_{\text{div}}$) & 0.1 \\
        $\alpha_{\text{reg}}$ (Coefficient of $\mathcal{L}_{\text{reg}}$) & 10 \\
        Number of testing episodes & 32 \\
        $N_{\text{min}}$ (Minimum number of iterations) & 4 (LBF, PP, CN), 3 (SMAC) \\
        $N_{\text{max}}$ (Minimum number of iterations) & 10 \\
        $\lambda$ (Threshold value for head expansion) & 0 \\
        $\xi$ (Threshold value for Macop's termination) & 0.5 \\
        $t_{\text{tm}}$ (Timesteps for training teammate population) & 500k (LBF, PP, CN), 600k (SMAC1), 1M (SMAC2) \\
        $t_{\text{ego}}$ (Timesteps for training controllable agents with one teammate group) & 125k (LBF, CN), 150k (PP), 300k (SMAC1), 500k (SMAC2) \\
        \bottomrule
    \end{tabular}}
    \label{hyperparameters}
\end{table*}

\begin{figure*}
\setlength{\abovecaptionskip}{0cm}
  \centering
    \subfigure[Level-based Foraging (LBF1)]{
    \label{env lbf1}
    \includegraphics[width=0.23\textwidth]{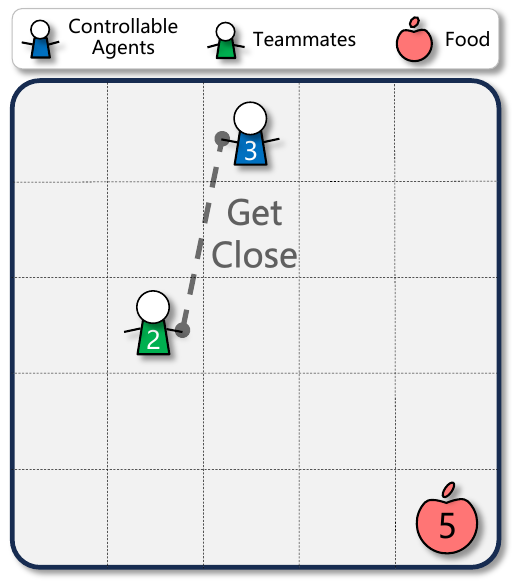}
    }
    \subfigure[Predator Prey (PP1)]{
    \label{env pp1}
    \includegraphics[width=0.23\textwidth]{Figures/envs/pp1.pdf}
    }
    \subfigure[Cooperative Navigation (CN2)]{
    \label{env cn2}
    \includegraphics[width=0.23\textwidth]{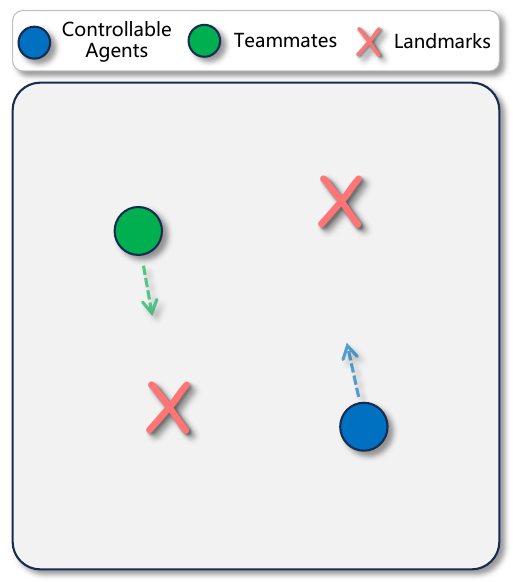}
    }
    \subfigure[{\scriptsize Starcraft Multi-agent Challenge} (SMAC1)]{
    \label{env smac1}
    \includegraphics[width=0.23\textwidth]{Figures/envs/smac1.pdf}
    }
    \subfigure[Level-based Foraging (LBF4)]{
    \label{env lbf4}
    \includegraphics[width=0.23\textwidth]{Figures/envs/lbf4.pdf}
    }
    \subfigure[Predator Prey (PP2)]{
    \label{env pp2}
    \includegraphics[width=0.23\textwidth]{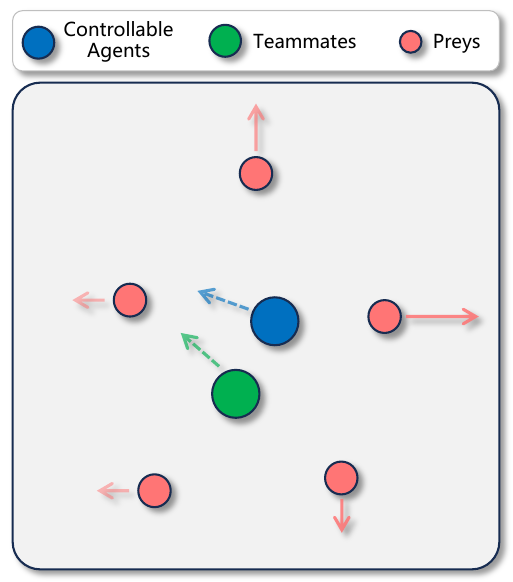}
    }
    \subfigure[Cooperative Navigation (CN3)]{
    \label{env cn3}
    \includegraphics[width=0.23\textwidth]{Figures/envs/cn3.pdf}
    }
    \subfigure[{\scriptsize Starcraft Multi-agent Challenge} (SMAC2)]{
    \label{env smac2}
    \includegraphics[width=0.23\textwidth]{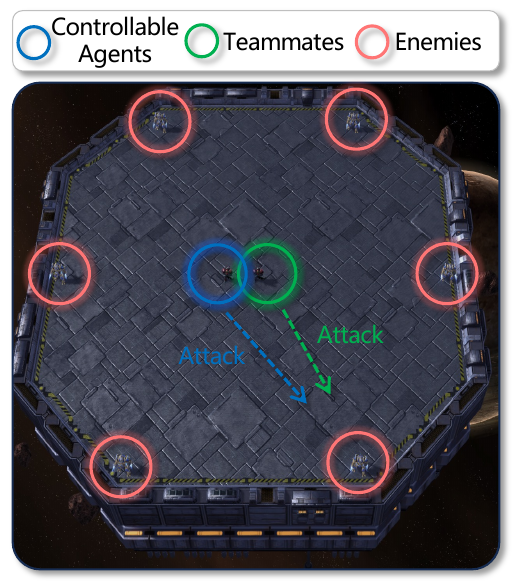}
    }
  \caption{Environments used in this paper.}
  \label{environments_appendix}
\end{figure*}

\subsection{More Details of Experiments}
\label{moreexperiments}
\subsubsection{Environments}\label{appx_env}

We select four multi-agent coordination environments and design two scenarios each as evaluation benchmarks. Four scenarios (LBF4, PP1, CN3, SMAC1) displayed in the manuscript, together with the other four, are shown in Fig.~\ref{environments_appendix}. Here we provide details of all eight scenarios.

Level-based Foraging (LBF)~\cite{lbf} is a discrete grid world game where two agents with varying levels navigate through the grid to collect foods with different levels. Each agent moves one cell at a time in one of the four directions \{up, left, down, right\}. Agents gain reward 1 when they are at a distance of one cell from the food and the sum of their levels matches or exceeds the level of the food.
LBF1 is a $5 \times 5$ grid world with one food at cell $(4, 4)$. Two agents are randomly spawned at cells $\{(0, 0), (0, 1), (1, 0), (1, 1)\}$ at the beginning of an episode. When both agents collect the food, they can gain reward 1, only if they have been at a distance of one cell two times before, or 0 otherwise. Therefore, this scenario requires the agents to not only simply navigate to the food, but also observe the teammate's preferred path and get close to it.
Another scenario named LBF4 is a $6 \times 6$ grid world with four foods at cells $(0, 0), (0, 5), (5, 0), (5, 5)$, and two agents are randomly spawned at cells $\{(2, 2), (2, 3), (3, 2), (3, 3)\}$, requiring the controllable agent to observe the teammate's preferred food and collect it together.

Predator Prey (PP) and Cooperative Navigation (CN) are two benchmarks coming from the popular MPE environment~\cite{maddpg}, where agents and landmarks are represented by circles of different radii in a two-dimension plane. Agents can accelerate in one of the four directions \{up, left, down, right\} and move on the plane to finish the tasks.
In environment PP~\cite{maddpg}, two agents (predators) must together pursue and collide the moving adversaries (preys) at the same time to gain reward 1. Here the preys are controlled by some fixed policies. In PP1, three preys are controlled by a random policy, and in PP2, five preys are controlled by a heuristic policy to run away from the nearest predator. Since multiple preys are spawned in different areas, the predators need to highly coordinate to pursue the same prey. Also, the controllable agent needs to adapt to the teammate's pursuit strategy to successfully block the prey from running.
In environment CN~\cite{maddpg}, $n$ agents navigate to $n$ landmarks separately and receive a reward when all $n$ landmarks have one agent nearby. In CN2, $n=2$, while in CN3, $n=3$ and we have 2 controllable agents, who need to coordinate with the teammates and separately move to different landmarks.

We also conduct experiments in the widely used StarCraft II combat scenario, SMAC~\cite{pymarl}, which involves unit micromanagement tasks. In this setting, ally units (agents) are trained to beat enemy units controlled by the built-in AI. Agents receive a positive reward signal by causing damage to enemies, killing enemies, and winning the battle. On the contrary, agents receive a negative reward signal when they receive damage from enemies, get killed, and lose the battle. 
In SMAC1, 2 ally units called ``marine'' engage 2 enemy groups in battle. These 2 enemy groups are separated on the map, and each enemy group also consists of 2 marines. To gain a higher reward, agents must move together to encounter one enemy group, and adapt to each other's combat strategies to strike the enemies.
In SMAC2, enemies are another type of unit known as ``zealot'' which is hard to be defeated but can only attack nearby units. Six of them are separated on the map. Agents need to attack only one zealot at a time without attracting other zealots' attention. It requires the agents to adapt to each other's combat route to attack the zealots one by one. In conclusion, all scenarios are involved multi-modality and varying behaviors among different teammate groups, requiring the controllable agents' strong coordination ability to finish the task with different diverse teammates.

\subsubsection{Baselines}

\textbf{FCP}~\cite{strouse2021collaborating} first trains a population of teammate policies independently until convergence and then trains the controllable agents by pairing them with these fixed teammate policies in the population. The diversity is induced solely by network random initialization. In our implementation, we set the population size as 6 and setup a replay buffer for each teammate group. Each time, we uniformly sample one teammate group to collect one trajectory, store it into its replay buffer, and sample a batch of trajectories for training. The overall population is trained for 1M timesteps in total. Then, we train the controllable agents for another 1M timesteps. Similarly, we uniformly sample one teammate group to pair with the controllable agents to collect trajectories into the replay buffer for training. \textbf{TrajeDi}~\cite{lupu2021trajectory} applies an auxiliary loss term $\mathcal{L}_{\text{sp}}$ when training the teammate group, while \textbf{LIPO}~\cite{lipo} replaces this term with $\mathcal{J}_{\text{LIPO}} = -\sum_{i\neq j}\mathcal{J}(\langle  \boldsymbol{\pi}_{\text{tm}}^i, \boldsymbol{\pi}_{\text{tm}}^j \rangle)$, as mentioned in the manuscript. The rest implementation details of TrajeDi and LIPO remain the same as FCP. \textbf{EWC}~\cite{kirkpatrick2017overcoming} is a popular regularization-based continual learning approach which maintains knowledge on learned tasks (teammate groups) by selectively slowing down learning on the weights which are important. In our implementation, when learning to coordinate with the $k^{\text{th}}$ teammate group, the loss term is $\mathcal{L}_{\text{ego}} = \mathcal{L}_{\text{com}} + \frac{\mu}{2} \sum\limits_j  F_j (\theta_j - \theta_{k-1, j})^2$, where $F_i$ is the $i^{\text{th}}$ diagonal element of the Fisher information matrix $F$, calculated with temporal difference error. $\theta_{k-1}$ is the saved snapshot of agent network parameters after training with the $(k-1)^{\text{th}}$ teammate group, and $j$ labels each parameter. $\mu$ is an adjustable coefficient to control the trade-off between the current and previous teammate groups. Another replay-based continual learning method, \textbf{CLEAR}~\cite{Rolnick2018ExperienceRF}, stores some trajectories of previously trained teammate groups to rehearse the controllable agents when training with the current teammate group. We split the replay buffer evenly for each trained teammate group in our implementation.
     \begin{figure}[H]
  \centering
    \label{loss}
    \includegraphics[width=0.41\textwidth]{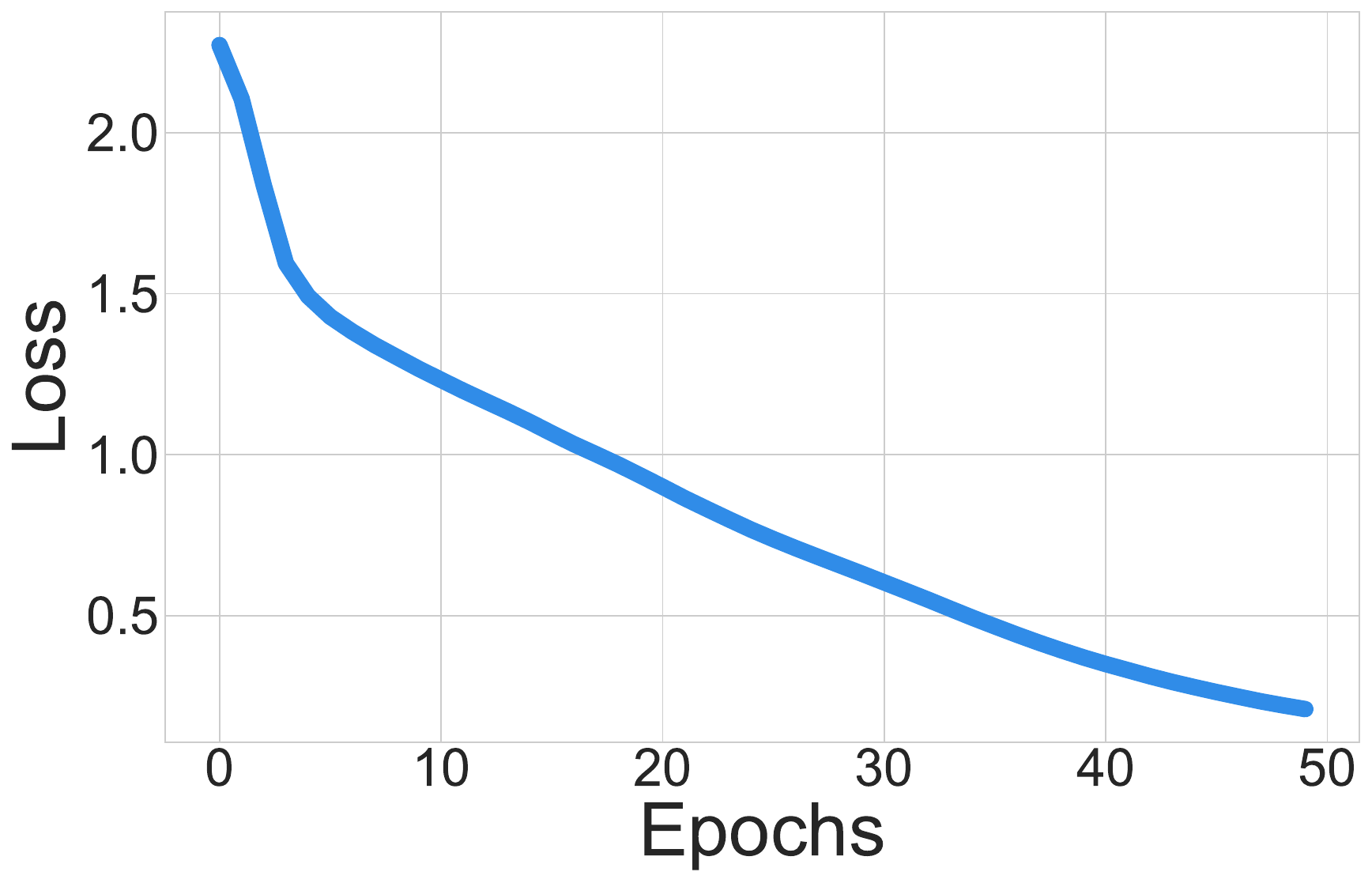}
  \caption{Negative log-likelihood loss for action prediction.}
  \label{loss}
\end{figure}   
\subsubsection{Trajectory Encoder}\label{moreencoder}

In Sec.~\ref{overall algorithm} of manuscript, we utilize a transformer-based trajectory encoder used in MACPro~\cite{DBLP:journals/corr/abs-2305-05116} to derive the feature of self-play trajectories rolled out by different teammate groups. For a trajectory $((s_1, \pmb{a}_1), \cdots, (s_T, \pmb{a}_T))$ with horizon $T$, we input it into a transformer encoder to derive $T$ 10-dimensional embeddings, then we apply a mean-pooling operator to derive the trajectory feature. To train this trajectory encoder, we also utilize a GRU-based~\cite{DBLP:conf/emnlp/ChoMGBBSB14} decoder. For a given trajectory, it takes the trajectory feature and the current state, and predicts the taken joint action. For accurate prediction, the encoder must encode the important feature of the trajectory. Both trajectory encoder and decoder are trained in an end-to-end manner for 50 epochs, with a batch size of 32 and an Adam~\cite{DBLP:journals/corr/KingmaB14} optimizer with learning rate $1 \times 10^{-4}$. The negative log-likelihood loss for action prediction is shown in Fig.~\ref{loss}.

\subsection{The Complete Sensitivity Analysis}
\label{completesensiti}

We provide a detailed sensitivity analysis of multiple hyperparameters in this section. First, two important hyperparameters $\alpha_{\text{incom}}, \alpha_{\text{div}}$ control the teammate generation process. If they are too small, diverse teammate groups cannot be generated efficiently. On the contrary, setting them to a large value will impair the learning of teammate policy. We find that $\alpha_{\text{incom}} = 0.1$ in PP2 and $\alpha_{\text{div}} = 0.1$ in LBF4 perform the best. Another hyperparameter $\alpha_{\text{reg}}$ controls the extent of the regularization on the backbone. A small value fails to prevent catastrophic forgetting, while too strong regularization with large $\alpha_{\text{reg}}$ will constrain the forward transfer with new teammate groups. According to Fig.~\ref{alpha reg}, $\alpha_{\text{reg}} = 10$ is an appropriate value in scenario PP2. The population size $n_p$ is also a critical hyperparameter (Fig.~\ref{population size app}), as analyzed in the manuscript.
For the stopping threshold $\xi$ and the head expansion threshold $\lambda$, we conduct experiments on LBF4 to investigate their sensitivity. $\xi$ determines the acceptance of Macop's convergence, where a small $\xi$ makes Macop stops easily without generating enough incompatible teammates, but a large $\xi$ is hard to converge and the controllable agents might overfit to some specific generated teammate groups. We find that $\xi = 0.5$ performs the best according to Fig.~\ref{stop threshold}. Finally, the head expansion threshold $\lambda$ determines whether a newly trained head should be saved. A small $\lambda$ leads to redundant output heads, increasing the storage overhead as well as the difficulty of selecting the optimal head during evaluation. When the value of $\lambda$ is too large, Macop will fail to be equipped with enough heads to handle the multi-modality of teammate behaviors in complex scenarios. As shown in Fig.~\ref{head threshold}, $\lambda = 0$ is the most appropriate head expansion threshold value.

\begin{figure}
  \centering
    \subfigure[]{
    \label{alpha xp}
    \includegraphics[width=0.26\textwidth]{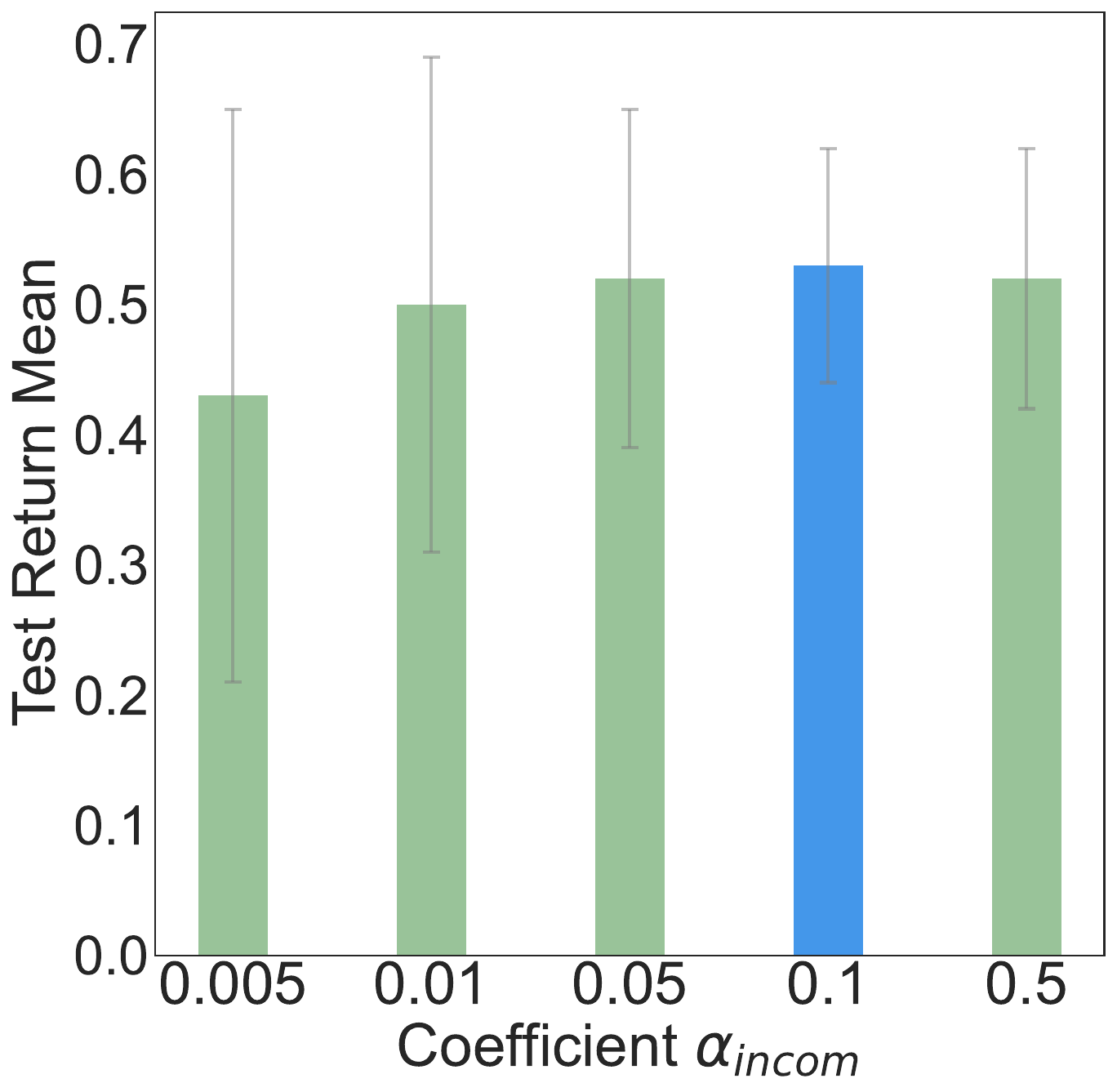}
    }
    \subfigure[]{
    \label{alpha div}
    \includegraphics[width=0.26\textwidth]{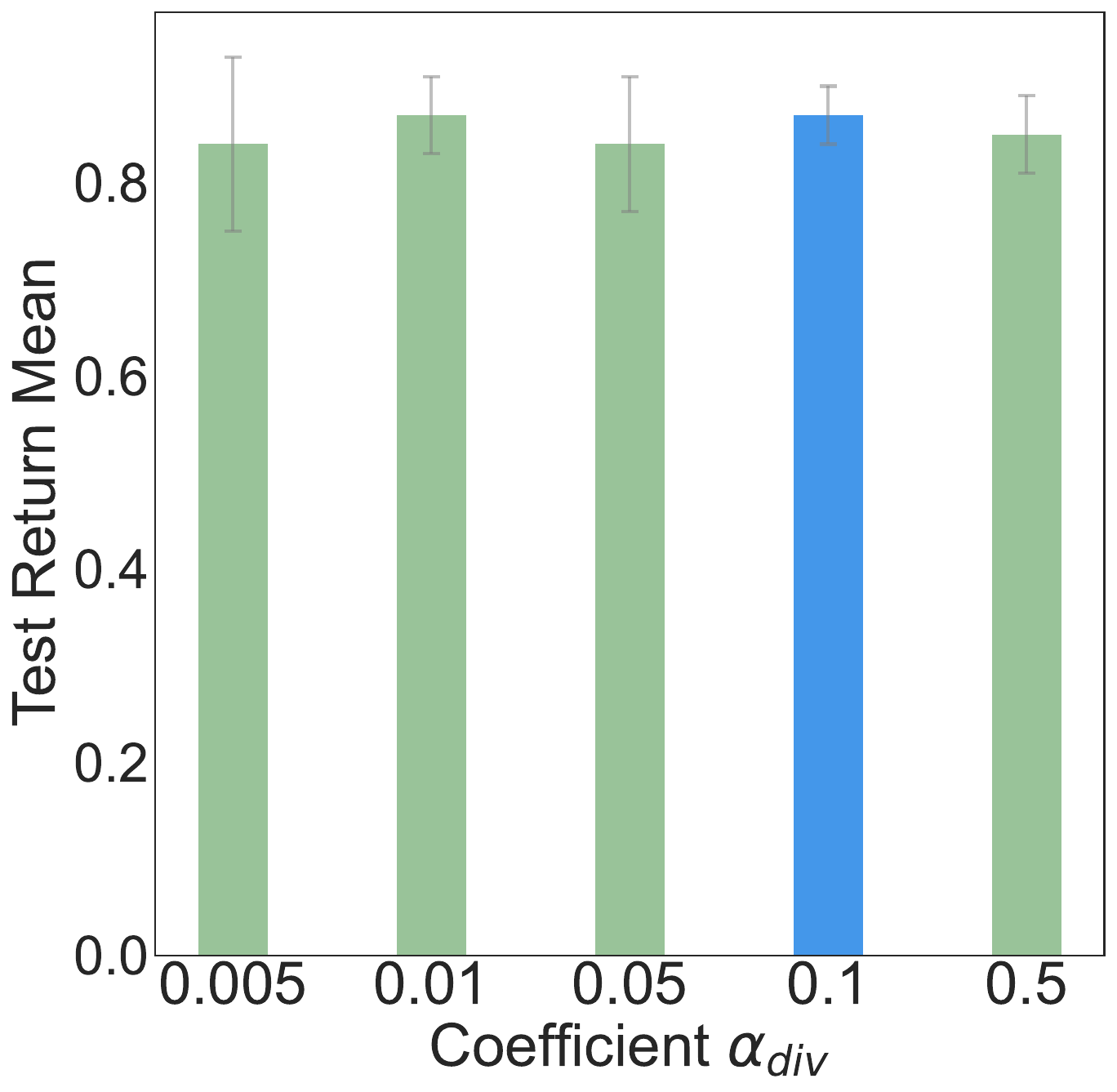}
    }
    \subfigure[]{
    \label{alpha reg}
    \includegraphics[width=0.26\textwidth]{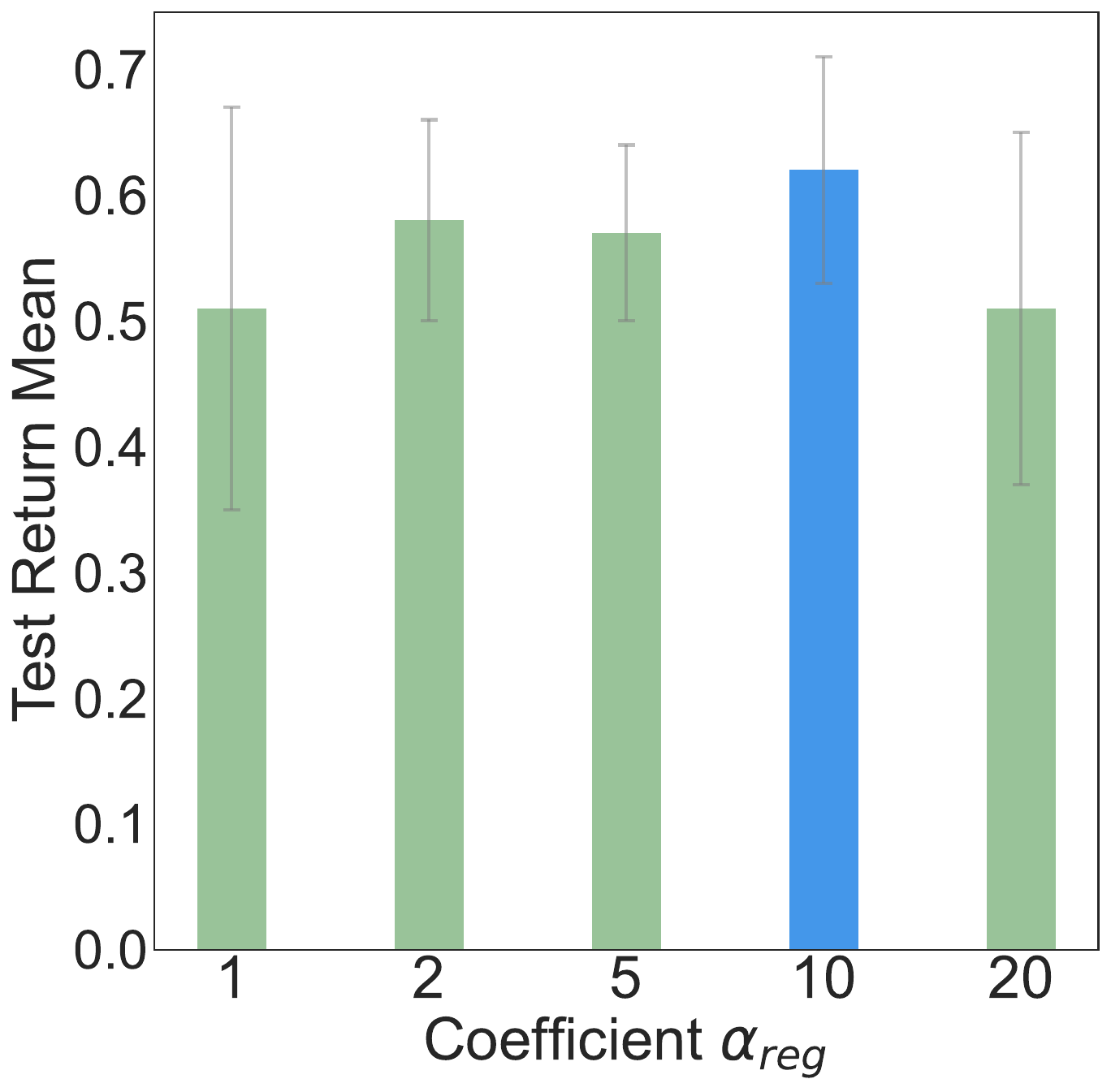}
    }
    \subfigure[]{
    \label{population size app}
    \includegraphics[width=0.26\textwidth]{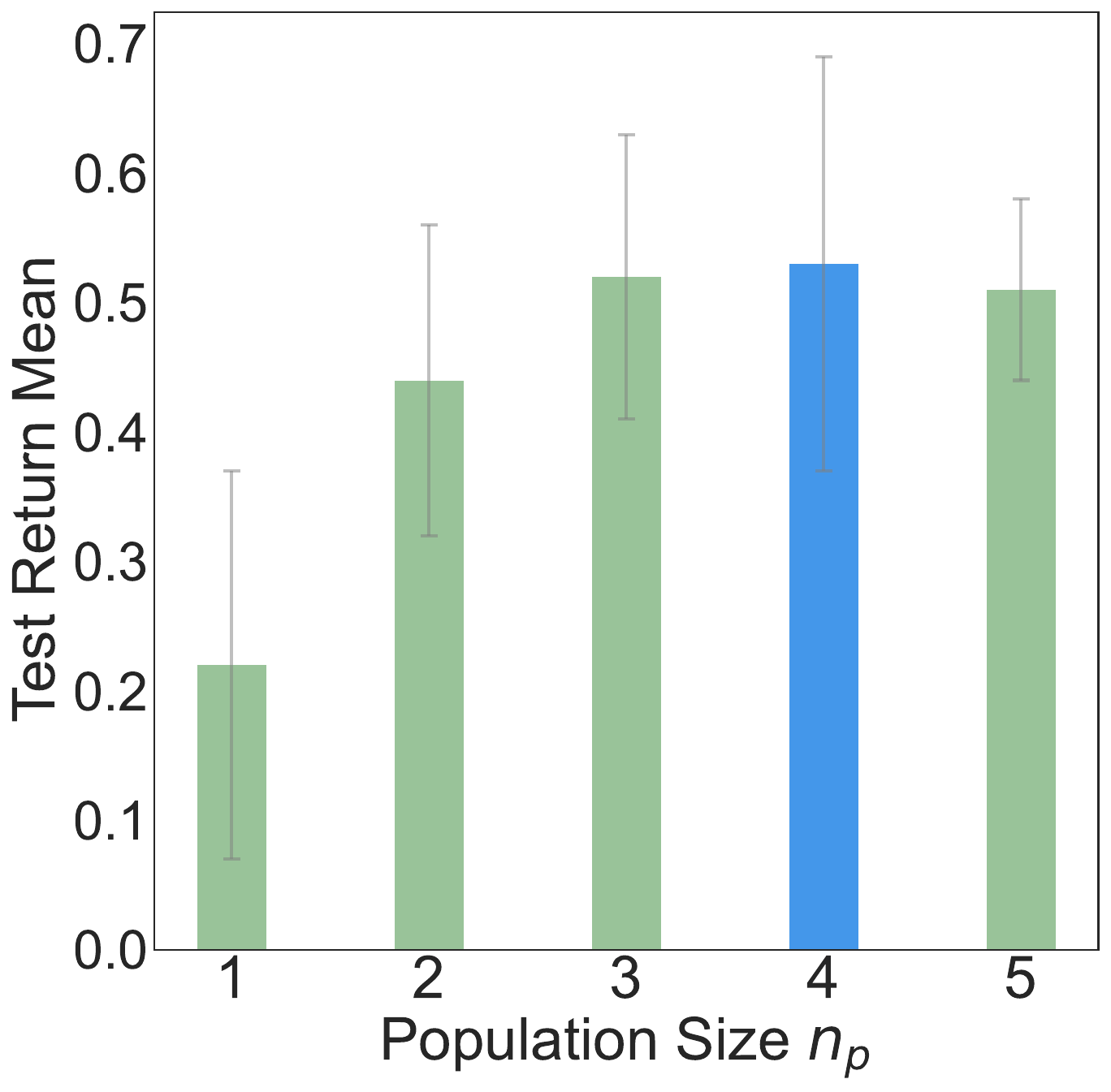}
    }
    \subfigure[]{
    \label{stop threshold}
    \includegraphics[width=0.26\textwidth]{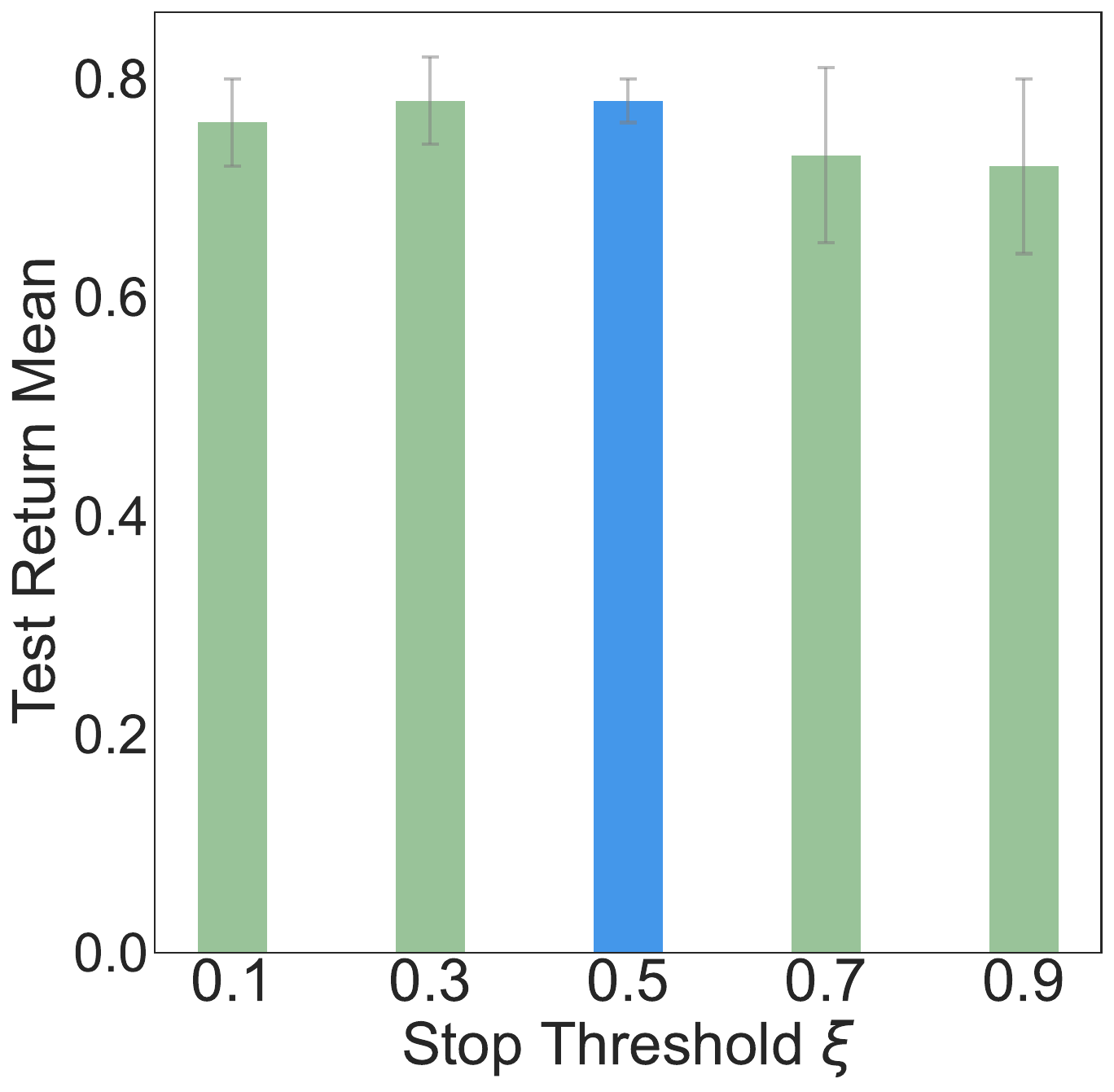}
    }
    \subfigure[]{
    \label{head threshold}
    \includegraphics[width=0.26\textwidth]{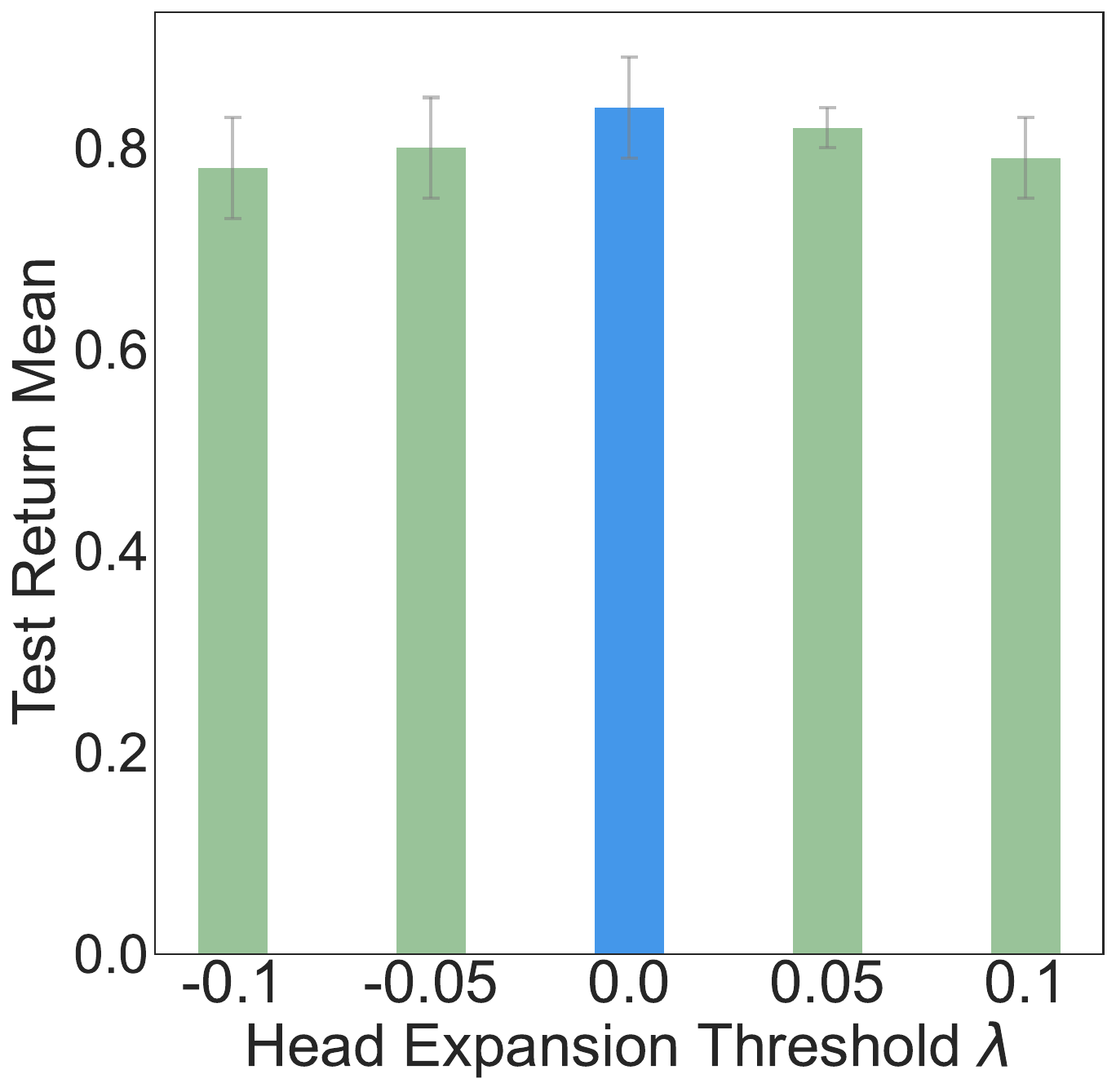}
    }
  \caption{The complete sensitivity analysis.}
  \label{sensitive appendix}
\end{figure}

\end{document}